\numberwithin{equation}{section}
\numberwithin{figure}{section}
\theoremstyle{plain}
\newtheorem{thm}{Theorem}
  \theoremstyle{definition}
  \newtheorem{definition}[thm]{Definition}
  \theoremstyle{plain}
  \newtheorem{fact}[thm]{Fact}
  \theoremstyle{definition}
  \newtheorem{remark}[thm]{Remark}
  \theoremstyle{plain}
  \newtheorem{lem}[thm]{Lemma}
 \theoremstyle{definition}
  \newtheorem{example}[thm]{Example}
\title{Disjunctive form and the modal $\mu$ alternation hierarchy}
\author{Karoliina Lehtinen
\institute{Laboratory for Foundations of Computer Science}
\institute{University of Edinburgh}
\email{M.K.Lehtinen@sms.ed.ac.uk}}
\newenvironment{scprooftree}[1]%
  {\gdef\scalefactor{#1}\begin{center}\proofSkipAmount \leavevmode}%
  {\scalebox{\scalefactor}{\DisplayProof}\proofSkipAmount \end{center} }
\begin{document}
\maketitle

\global\long\def\dmod{{\rightarrow}}
\begin{abstract}
This paper studies the relationship between disjunctive form,
a syntactic normal form for the modal $\mu$ calculus, and the alternation hierarchy.
First it shows that all disjunctive formulas which
have equivalent tableau have the same syntactic alternation depth.
However, tableau equivalence only preserves alternation depth for
the disjunctive fragment: there are disjunctive formulas with arbitrarily
high alternation depth that are tableau equivalent to alternation-free
non-disjunctive formulas. Conversely, there are non-disjunctive formulas
of arbitrarily high alternation depth that are tableau equivalent
to disjunctive formulas without alternations. This answers negatively
the so far open question of whether disjunctive form preserves alternation
depth. The classes of formulas studied here illustrate a previously
undocumented type of avoidable syntactic complexity which may contribute
to our understanding of why deciding the alternation hierarchy is
still an open problem.
\end{abstract}

\section{Introduction}

The modal $\mu$ calculus \cite{bradfield2007modal}, $L_{\mu}$, is a
modal logic augmented with its namesake least fixpoint operator $\mu$ and the dual greatest
fixpoint operator, $\nu$. Alternating between these two operators
gives the logic its great expressivity \cite{bradfield-strict} while both model checking and
satisfiability remain pleasingly decidable. The complexity of model
checking is, at least currently, tied to the number of such alternations,
called the alternation depth of the formula being checked \cite{jurdzinski2000small}.
The problem of deciding the least number of alternations required to express a property, also known as
the Rabin-Mostowski index problem, is a long standing open problem.

Disjunctive normal form is a syntactic restriction on $L_\mu$
formulas which first appeared in \cite{janin1995automata} and was then 
used as a tool for proving completeness of Kozen's axiomatization \cite{walukiewicz2000completeness}.
 It is based on the tableau decomposition of a formula which forces it to be in many ways well-behaved, 
making it a useful tool for various manipulations. For instance, satisfiability and synthesis are straight-forward for disjunctive
formulas. In \cite{DAgostinoHollenberg} it is used to analyse modal $L_\mu$ from a logician's perspective.
 More recently, disjunctive form was found to allow for simple formula optimisation:
if a formula is equivalent to a formula without greatest fixpoints,
then such a formula is easily produced by simple syntactic manipulation
on the disjunctive form of the formula \cite{me-CSL-2015}.

Each of these results uses the fact that any formula can be effectively
transformed into an equivalent disjunctive formula with the same tableau
-- indeed, disjunctive form is perhaps the closest one gets to a canonical normal
form for $L_\mu$. The transformation itself, described in \cite{janin1995automata},
is involved and it has so far been an open question whether it preserves
the alternation depth of formulas. If this was the case, it would
be sufficient to study the long-standing open problem of the decidability
of the alternation hierarchy on this well-behaved fragment. \\

 In this paper, we show that although the disjunctive fragment of $L_\mu$
is itself well-behaved with respect to the alternation
hierarchy, the transformation into it does not preserve alternation
depth.

The transformation into disjunctive form takes the tableau decomposition of a formula, and  produces a disjunctive formula that generates the same tableau.
 The first contribution of this paper is to show that all disjunctive formulas generating the same tableau have the same alternation depth.
This result brings some clarity to the transformation into disjunctive form since one of the more
difficult steps of the construction is representing the parity of infinite paths of the tableau with a finite priority assignment.
The result presented here means that all valid choices are just as good, as all yield a disjunctive formula
of the same alternation depth. As a result, the alternation hierarchy is decidable
for the disjunctive fragment of $L_\mu$ with respect to tableau
equivalence, a stricter notion of equivalence than semantic equivalence, as defined in \cite{walukiewicz2000completeness}.

The second contribution of this paper is to show that this does not extend to non-disjunctive formulas. Not only
does tableau equivalence not preserve alternation depth in general, but 
the alternation depth of a formula does not guarantee \textit{any} upper bound on the alternation depth
of equivalent disjunctive formulas. Indeed, for arbitrarily large $n$, there
are formulas with a single alternation which are tableau equivalent
only to disjunctive formulas with at least $n$ alternations.

Conversely, there are formulas of $L_\mu$ with arbitrarily large
alternation depth which are tableau equivalent to a disjunctive formula
without alternations. This shows that the alternation depths of tableau
equivalent formulas are only directly related within the disjunctive
fragment. \\

The signficance of these results in twofold. First, they outline the limits of what can be achieved using disjunctive form:
disjunctive form does not preserve alternation depth so despite being a useful tool for satisfiability-related problems, it
is unlikely to be of much help in contexts where the alternation depth of a formula matters, such as model-checking or 
formula optimisation beyond the first levels of the alternation hierarchy.

Secondly, and perhaps most significantly, these results impact our understanding of the alternation hierarchy.
This paper's results imply that deciding the alternation hierarchy for the disjunctive fragment of $L_\mu$, an open but
easier problem, is not sufficient for deciding the alternation hierarchy in the general case.
The counterexamples used to show this illustrate a previously undocumented type of accidental
complexity which appears to be difficult to identify. These may shed light on why deciding the alternation hierarchy is still
an open problem and examplify a category of formulas with unnecessary alternations which need to be tackled with novel methods.

\paragraph{Related work}
Deciding the modal $\mu$ alternation hierarchy is exactly equivalent to deciding the Rabin-Mostowski index of alternating parity automata.
The corresponding problem has also been studied for automata operating on words \cite{Carton1999} and automata which are deterministic \cite{det-rabin,flowers},
or non-deterministic \cite{non-det-rabin,game-aut-rabin} rather than alternating. As will be highlighted throughout this paper,
many of the methods used here are similar to methods applied to different types of automata.

\section{Preliminaries}

\subsection{The modal $\mu$ calculus}

For clarity and conciseness, the semantics of $L_{\mu}$ are given directly
in terms of parity games. As is well documented in the literature,
this approach is equivalent to the standard semantics \cite{bradfield2007modal}.
The following definitions are fairly standard, although we draw the reader's attention to
the use of the less typical modality $\dmod\mathcal{B}$ in the syntax of $L_\mu$ and
the unusual but equivalent definition of alternation depth.

\begin{definition}
($L_\mu$) Given a set of atomic propositions $Prop=\{P,Q,...\}$
and a set of fixpoint variables $Var=\{X,Y,...\}$ , the syntax of
$L_{\mu}$ is given by:
\[ 
   \phi:= \top 
   \mbox{ $|$ }\bot
   \mbox{ $|$ } P
   \mbox{ $|$ }\neg P
   \mbox{ $|$ } X
   \mbox{ $|$ }\phi\wedge\phi
   \mbox{ $|$ }\phi\vee\phi
   \mbox{ $|$ }\dmod\mathcal{B}\mbox{ where \ensuremath{\mathcal{B}} is a set of formulas }
   \mbox{ $|$ }\mu X.\phi
   \mbox{ $|$ }\nu X.\phi
\]

 The modality $\dmod\mathcal{B}$ replaces the more usual
modalities $\Diamond\phi$ and $\Box\phi$. If $\mathcal{B}$ is a set of formulas, $\dmod\mathcal{B}$
stands for $(\bigwedge_{\phi\in\mathcal{B}}\Diamond\phi)\wedge\square\bigvee_{\phi\in\mathcal{B}}\phi$:
every formula in $\mathcal{B}$ must be realised in some successor state and each successor state must realise at
least one of the formulas in $\mathcal{B}$.
The modalities $\Diamond\phi$ and $\Box\phi$ are expressed in this syntax by
$\dmod\{\phi,\top\}$ and $\dmod\{\phi\}\vee\dmod\bot$ respectively, where $\bot$ denotes the empty set.
\end{definition}

Without loss of expressivity, this syntax only allows for formulas in positive form:
negation is only applied to propositions. Furthermore, without loss of expressivity, but perhaps conciseness, we require all formulas to be guarded:
all fixpoint variables are within the scope of a modality within their binding formula. For the sake of clarity,
we restrict our study to the uni-modal case but expect the multi-modal
case to behave broadly speaking similarly. To minimise the use of brackets, the scope of fixpoint bindings
should be understood to extend as far as possible.
\begin{definition}
\textit{(Structures)} A structure $\mathcal{M}=(S,s_{0},R,P)$ consists
of a set of states $S$, rooted at some initial state $s_{0}\in S$,
and a successor relation $R\subseteq S\times S$ between the states.
Every state $s$ is associated with a set of propositions $P(s)\subseteq Prop$
which it is said to satisfy.
\end{definition}
\begin{definition}

\textit{(Parity games)} A parity game is a potentially infinite two-player
game on a finite graph $\mathcal{G}=(V_{0},V_{1},E,v_{I},\Omega)$ of which
the vertices $V_{0}\cup V_{1}$ are partitioned between the two players
Even and Odd and annotated with positive integer priorities via $\Omega:V_{0}\cup V_{1}\rightarrow\mathbb{N}$.
The even player and her opponent, the odd player, move a token along
the edges $E\subseteq V_{0}\cup V_{1}\times V_{0}\cup V_{1}$ of the graph
starting from an initial position $v_{I}\in V_{0}\cup V_{1}$, each
choosing the next position when the token is on a vertex in their
partition. Some positions $p$ might have no successors in which case
they are winning for the player of the parity of $\Omega(p)$. A play
consists of the potentially infinite sequence of vertices visited
by the token. For finite plays, the last visited parity decides the
winner of the play. For infinite play, the parity of the highest priority
visited infinitely often decides the winner of the game: Even wins
if the highest priority visited infinitely often is even; otherwise
Odd wins. Note that since some readers may be used to an equivalent definition
using the lowest priority to define the winner, whenever possible,
``most significant'' will be used to indicate the highest priority.
\end{definition}
\begin{definition}
\textit{(Strategies)} A positional strategy $\sigma$ for one of the
players in $\mathcal{G}=(V_{0},V_{1},E,v_{I},\Omega)$ is a mapping
from the player's positions $s$, in $V_{0}$ for Even and in $V_{1}$ for Odd, in the game to a
successor position $s'$ such that $(s,s')\in E$. A play respects a player's strategy $\sigma$
if the successor of any position in the play belonging to the player
is the one dictated by $\sigma$. If $\sigma$ is Even's strategy
and $\tau$ is Odd's strategy then there is a unique play $\sigma\times\tau$
respecting both strategies from every position. The winner of the
parity game at a position is the player who has a strategy $\sigma$
, said to be a winning strategy, such that they win $\sigma\times\tau$
from that position for any counter-strategy $\tau$. The following
states that such strategies are sufficient: players do not need to
take into account the history of a play to play optimally.\end{definition}

\begin{fact}
Parity games are positionally determined: for every position either
Even or Odd has a winning strategy \cite{emerson1991tree}.
\end{fact}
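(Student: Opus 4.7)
The plan is to prove positional determinacy by induction on the number of positions $|V_0\cup V_1|$ of the parity game, using the attractor construction as the main tool. For a player $i\in\{0,1\}$ and a target set $U\subseteq V_0\cup V_1$, the $i$-attractor $\mathrm{Attr}_i(U)$ is the set of positions from which player $i$ can force a visit to $U$ in finitely many moves, obtained as the least fixpoint of the monotone operator adding a vertex $v\in V_i$ having some successor in the current set, or a vertex $v\in V_{1-i}$ all of whose successors are in the current set. A straightforward inductive argument yields a positional strategy for player $i$ driving every play from $\mathrm{Attr}_i(U)$ into $U$ in boundedly many steps; dead-ends are absorbed into the winning region of the player whose parity they satisfy at the base of the main induction.

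For the inductive step let $d$ be the highest priority in $\mathcal{G}$ and $i$ its parity, so that any infinite play visiting $d$ infinitely often is won by player $i$. Set $N=\Omega^{-1}(d)$ and $A=\mathrm{Attr}_i(N)$, and form the subgame $\mathcal{G}'$ obtained by removing $A$; this is again a well-formed parity game by the attractor closure condition. Apply the induction hypothesis to $\mathcal{G}'$ to obtain positional winning regions $W'_0,W'_1$. If $W'_{1-i}=\emptyset$, then player $i$ wins all of $\mathcal{G}$ by combining her positional winning strategy on $\mathcal{G}'$ with her attractor strategy on $A$: any resulting play either eventually stays in $\mathcal{G}'$, where she wins by the induction hypothesis, or visits $N$ infinitely often, where she wins because $d$ is the most significant priority and has her parity. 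Otherwise let $B=\mathrm{Attr}_{1-i}(W'_{1-i})$ in $\mathcal{G}$, recurse on the strictly smaller subgame $\mathcal{G}\setminus B$, and glue the resulting positional strategies with the attractor strategy on $B\setminus W'_{1-i}$ and the winning strategy of player $1-i$ on $W'_{1-i}$.

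The main obstacle is the careful bookkeeping in the second case needed to verify that player $1-i$'s glued strategy is genuinely positional and genuinely winning on all of $B$. Specifically, one must check that any play conforming to it either stays forever in $V\setminus B$ --- handled by the recursive call --- or eventually enters $W'_{1-i}$ and remains there, in which case the induction hypothesis applied to $\mathcal{G}'$ provides the win, noting that leaving $W'_{1-i}$ inside $\mathcal{G}'$ is precluded by her own winning strategy while leaving via $A$ is precluded by the fact that she controls the escape. Symmetrically, player $i$'s winning region is exactly the complement, yielding positional winning strategies for the two players on complementary subsets of $V$, as required.
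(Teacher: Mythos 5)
The paper does not prove this statement at all: it is stated as a \emph{Fact} and attributed to Emerson and Jutla, so there is no internal proof to compare against. Your argument is the standard self-contained proof of positional determinacy for finite parity games --- essentially Zielonka's (and McNaughton's) induction on the size of the arena using attractors --- and its skeleton is correct: the case split on whether $W'_{1-i}$ is empty, the use of the fact that the complement of an $i$-attractor is a trap for player $i$ (so no player-$i$ vertex outside $A$ has an edge into $A$, which is the precise content of your phrase ``she controls the escape''), and the recursion on the strictly smaller game $\mathcal{G}\setminus B$ are all as in the textbook argument. This buys something the citation does not, namely an elementary proof working entirely within the finite-graph setting the paper actually uses; the cited Emerson--Jutla proof is a different, signature-based argument that also covers infinite arenas. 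Two points deserve tightening: dead-ends are not confined to ``the base of the main induction'' --- with the paper's convention that a successor-free position is won by the player of the parity of its priority, you should remove the two attractors of the dead-end sets up front (or absorb each dead-end of priority congruent to $j$ into player $j$'s region at every level of the recursion), since new dead-ends cannot arise when passing to a subgame but old ones persist; and in the first case you should say explicitly that on $N$ itself player $i$ plays an arbitrary successor, after which the dichotomy ``enters $A$ infinitely often, hence sees $d$ infinitely often'' versus ``eventually stays in $\mathcal{G}'$'' does the work.
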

 This means
that strategies gain nothing from looking at the whole play rather
than just the current position. As a consequence, we may take a strategy
to be memoryless: it is a mapping from a player's positions to a successor. \\

For any $L_\mu$ formula $\phi$ and a structure $\mathcal{M}$
we define a parity game $\mathcal{M}\times\phi$, constructed in polynomial
time, and say that $\mathcal{M}$ satisfies $\phi$, written $\mathcal{M}\models\phi$,
if and only if the Even player has a winning strategy in $\mathcal{M}\times\phi$. 
\begin{definition}
\textit{(Model-checking parity game)} For any formula $\phi$ of modal
$\mu$, and a model $\mathcal{M}$,
define a parity game $\mathcal{M}\times\phi$ with positions $(s,\psi)$
where $s$ is a state of $\mathcal{M}$ and $\psi$ is either a proper subformula
of $\phi$, or the formula $\bigvee\mathcal{B}$, or the formula $\Diamond\psi$ for
any $\dmod\mathcal{B}$ and $\psi\in\mathcal{B}$ in $\phi$.
The initial position is $(s_{0},\phi)$ where $s_{0}$ is the root of $\mathcal{M}$. Positions $(s,\psi)$
where $\psi$ is a disjunction or $\Diamond\psi'$ belong to Even
while conjunctions and positions $\dmod\mathcal{B}$ belong to Odd.
Other positions have at most one successor; let them be Even's although
the identity of their owner is irrelevant. There are edges from $(s,\psi\vee\psi')$
and $(s,\psi\wedge\psi')$ to both $(s,\psi)$ and $(s,\psi')$; from
$(s,\mu X.\phi)$ and $(s,\nu X.\phi)$ to $(s,\phi)$; from $(s,X)$
to $(s,\nu X.\psi)$ if $X$ is bound by $\nu$, or $(s,\mu X.\psi)$
if it is bound by $\mu$; finally, from $(s,\dmod\mathcal{B})$ to every 
$(s',\bigvee\mathcal{B})$ where $(s,s')$ is an edge in $\mathcal{M}$, and also to every
$(s,\Diamond\psi)$ where $\psi\in\mathcal{B}$ and from $(s,\Diamond\psi)$ to every $(s',\psi)$ where $(s,s')$
is an edge in the model $\mathcal{M}$. Positions $(s,P)$,$(s,\neg P),(s,\top)$
and $(s,\bot)$ have no successors. The parity function assigns an
even priority to $(s,\top)$ and also to $(s,P)$ if $s$ satisfies
$P$ in $\mathcal{M}$ and to $(s,\neg P)$ if $s$ does not satisfy
$P$ in $\mathcal{M}$; otherwise $(s,P)$ and $(s,\neg P$) receive
odd priorities, along with $(s,\bot)$. Fixpoint variables are given
positive integer priorities such that $\nu$-bound variables receive even
priorities while $\mu$-bound variables receive odd priorities. Furthermore,
whenever $X$ has priority $i$, $Y$ has priority $j$ and $i<j$,
$X$ must not appear free in the formula $\psi$ binding $Y$ in $\mu Y.\psi$
or $\nu Y.\psi$. In other words, inner fixpoints receive lower, less
significant priorities while outer fixpoint receive high priorities.
Other nodes receive the least priority used, 0 or 1.
\end{definition}
We now use parity games to define the semantics of $L_{\mu}$.
\begin{definition}
\textit{(Satisfaction relation)} A structure $\mathcal{M}$, rooted
at $s_{0}$ is said to satisfy a formula $\Psi$ of $L_{\mu}$, written
$\mathcal{M}\models\Psi$ if and only if the Even player has a winning
strategy from $(s_{0},\Psi)$ in $\mathcal{M}\times\Psi$ .
\end{definition}
Note that the definition of the model-checking parity game requires
a priority assignment to fixpoint variables in a formula that satisfies
the conditions that $\nu$-variables receive even priorities, $\mu$-variables
receive odd priorities and whenever $X$ has priority $i$, $Y$ has priority
$j$ and $i<j$, $X$ must not appear free in the formula $\psi$
binding $Y$ in $\mu Y.\psi$ or $\nu Y.\psi$. For any formula, there are several
valid assignments. For example, one could assign a distinct priority to every fixpoint,
with the highest priority going to the outermost bound fixpoint and the priorities decreasing
the further into the formula a fixpoint is bound. We further restrict a parity assignment to
be surjective into an initial fragment of $\mathbb{N}$: if a priority is unused, all greater priorities can be reduced by $2$.
We define the alternation depth of a formula to be the minimal valid assignment. Although variations
of this definition exists, our motivation is to match closely the alternations required in the model checking parity game.
\begin{definition}
Let a priority assignment be a function $\Omega:Var\rightarrow \{0...n\}$ for some integer $n$,
which is surjective on at least $\{1,...,n\}$,
such that if $\Omega(X)<\Omega(Y)$ then $X$ does not appear free
in the formula binding $Y$ and the parity of $\Omega(X)$ is even
for $\nu$-bound variables and odd for $\mu$-bound variables.
We don't require the priority $0$ to be used, but include it in the co-domain for simplicity.
 In this paper, we take the alternation depth of a formula to be the co-domain
of the least priority assignment of a formula. The correspondance with the priorities of the
model checking parity game should make it clear that this definition is equivalent to the more typical syntactic
ones in the literature, for example in \cite{bradfield2007modal}.
An alternation free formula is a formula which has both priority assignements with co-domain $\{0,1\}$ and $\{0,1,2\}$ where $0$ is not used.
\end{definition}
Deciding whether a formula is equivalent to a formula with smaller alternation
depth is a long standing open problem.

\subsection{Tableau decomposition}
\begin{definition}
\textit{(Tableau)} A tableau $\mathcal{T}=(T,L)$ of a formula $\Psi$
consists of a potentially infinite tree $T$ of which each node $n$
has a label $L(n)\subseteq\mathit{sf}(\Psi)$ where $\mathit{sf}(\Psi)$ is the
set of proper subformulas of $\Psi$. The labelling respects
the following tableau rules with the restriction that the modal rule is only applied where
no other rule is applicable. 
\begin{figure}[H]

\begin{multicols}{3}

\begin{prooftree}
\AxiomC{$\{ \Gamma , \phi , \psi \}$}
\RightLabel{$(\wedge)$}
\UnaryInfC{$\{ \Gamma, \psi \wedge \phi \}$}
\end{prooftree}

\begin{prooftree}
\AxiomC{$\{ \Gamma , \phi \}$}
\AxiomC{$\{ \Gamma , \psi \}$}
\RightLabel{$(\vee)$}
\BinaryInfC{$\{ \Gamma, \psi \vee \phi \}$}
\end{prooftree}

\begin{prooftree}
\AxiomC{$\{ \Gamma , \phi \}$}
\RightLabel{$(\sigma)$  with $\sigma \in \{\mu ,\nu \}$}
\UnaryInfC{$\{ \Gamma, \sigma X.\phi \}$}
\end{prooftree}
\end{multicols}
\begin{prooftree}
\AxiomC{$\{ \Gamma , \phi \}$}
\RightLabel{$(X)$ where $X$ is a fixpoint variable bound by $\sigma X.\phi$, with $\sigma\in\{\mu,\nu\}$}
\UnaryInfC{$\{ \Gamma,  X \}$}
\end{prooftree}
\begin{prooftree}
\AxiomC{$\{ \psi \}\cup \{ \bigvee \mathcal{B} | \dmod \mathcal{B} \in \Gamma, \mathcal{B}\neq\mathcal{B}' \} $ for every $\dmod \mathcal{B}' \in \Gamma, \psi \in \mathcal B' $}
\RightLabel{$(\dmod)$}
\UnaryInfC{$\{ \Gamma \}$}

\end{prooftree}

\end{figure}

Note that each branching
node is either a choice node, corresponding to a disjunction, or a
modal node. Although the rules only contain a binary disjunctive rule,
we may write, for the sake conciseness, a sequence of binary choice
nodes as a single step. Also note that when a modal rule is applied,
all formulas in a label are either modal formulas or literals, that is to say propositional variables and their negations.
The latter form the modal node's set of literal and are a
semantically important component of the tableau. An inconsistent set of literals is equivalent to $\bot$ and a node with such
a set of literals in its label has no successors.
\end{definition}

Sequences of subformulas along a path in the tableau are called traces and correspond to plays in
the model checking parity game. A $\mu$-trace is a trace winning for the Odd player.

\begin{definition}
\textit{($\mu$-trace)} Given an infinite branch in a tableau, that
is to say a sequence $n_{0}n_{1}...$ of nodes starting at the root,
where $n_{i+1}$ is a child of $n_{i}$, a trace on it is an infinite
sequence $f_{0}f_{1}...$ of formulas satisfying the following: each
formula is taken from the label of the corresponding node, $f_{i}\in L(n_{i})$
for all $i\geq0$; successive formulas $f_{i}$ and $f_{i+1}$ are
identical if $f_{i}$ is not the formula that the tableau rule
from $n_{i}$ to $n_{i+1}$ acts on; 
if the tableau rule from $n_{i}$
to $n_{i+1}$ is a disjunction, conjunction, or fixpoint binding elimination acting on $f_{i}$,
then $f_{i+1}$ is an immediate subformula of $f_{i}$;
if the tableau rule from $n_{i}$ to $n_{i+1}$ is a modality, then $f_i$ has to be a formula
 $\dmod\mathcal{B}$ and $f_{i+1}$ is either $\bigvee \mathcal{B}$ or a formula $\psi\in\mathcal{B}$;
if the tableau rule from $n_{i}$ to $n_{i+1}$ is a fixpoint regeneration acting
on the fixpoint variable $f_{i}$, then $f_{i+1}$ is the binding
formula for $f_{i}$. A trace is a $\mu$-trace if the most significant
fixpoint variable that regenerates infinitely often on it is a $\mu$-variable.
\end{definition}
Since labels are to be thought of as conjuncts, it is sufficient for
an infinite path in a tableau to allow one $\mu$-trace for the infinite
path to be winning for the Odd player.
\begin{definition}
\textit{(Parity of a path)} An infinite path in a tableau is said to be even
if there are no $\mu$-traces on it, otherwise it is said to be odd.
\end{definition}
Note that the order of applications of the tableau rules is non deterministic
so a formula may appear to have more than one tableau. However, tableau
equivalence, defined next, only looks at the structure of branching,
whether branching nodes are modal or disjunctive, the literals at modal nodes and the parity of
infinite paths, so a formula has a unique tableau, up to tableau equivalence. We define tableau cores
to be the semantic elements of the tableau -- node types, literals at modal nodes, branching structure
and the parity of infinite paths -- which do not depend on the syntax of the generating formula. Finally, we define
trees with back edges which are finite representations of tableau cores.

\begin{definition}
\textit{(Tableau core)} 
A tableau core is $\mathcal{C}=(C,\Omega)$ where $C$ is a potentially infinite but still finitely branching tree of which the nodes are either modal nodes or disjunctive nodes
and modal nodes are decorated with a set of literals. $\Omega$ is a parity assignment with a finite prefix of $\mathbb{N}$ as co-domain. An infinite path in $\mathcal{C}$ is of the parity of
the most significant priority seen infinitely often. 
$\mathcal{C}=(C,\Omega)$ is a tableau core for $\mathcal{T}=(T,L)$ if once the sequences of disjunctions in $\mathcal{T}$ are collapsed into one non-binary disjunction 
there is a bijection $b$ between the branching nodes of $T$ and the nodes of $C$ which respects the following: the successor relation in the sense
that $b(i)$ is a child of $b(j)$ in $C$ if and only if $i$ is a child of $j$ in $T$,
whether nodes are modal or disjunctive, the literals at modal nodes, and the parity of infinite paths. That is to say, if a path in $\mathcal{T}$
maps to a path in $\mathcal{C}$ then the highest priority seen infinitely often on the path in $\mathcal{C}$ is even if and only if the path in $\mathcal{T}$ has no $\mu$-trace.
\end{definition}
\begin{definition}
\textit{(Tableau equivalence)}
Two tableaus $(\mathcal{T}_{0},L_{0})$ and $(\mathcal{T}_{1},L_{0})$
are equivalent if their cores are bisimilar with respect to their
branching structure, whether nodes are disjunctive or modal, the literals at modal nodes and the parity of infinite branches. Two formulas are tableau equivalent if they generate equivalent tableaus.
\end{definition}
\begin{definition}
\textit{(Tree with back edges)} Tableaus are potentially
infinite but regular, so they allow finite representations. A finite representation of a tableau $\mathcal{A}=(A,\Omega)$ is a finite tree with back edges, $A$
which is bisimilar to the core of the tableau. Every node is either a modal node or a disjunctive node and modal nodes are associated with a set of literals.
The tree has a priority assignment $\Omega$ which assigns priorities to nodes such that the highest priority on an infinite path is of the parity of that path.
\end{definition}

To summarise, a tableau $\mathcal{T}$ is a potentially infinite tree labelled with sets of subformulas -- it is specific to the formula which labels its root;
a tableau core, $\mathcal{C}$ is a potentially infinite object which carries the same semantics but is not specific to one formula; finally, a tree with back
edges, called $\mathcal{A}$ because of its resemblance to alternating parity automata, is a finite representation of a tableau core. The next section will present
the one-to-one correspondence between disjunctive formulas and trees with back edges.

\begin{thm}
 \cite{janin1995automata} Tableau equivalent formulas are semantically equivalent.
\end{thm}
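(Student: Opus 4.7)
The plan is to show that satisfaction of a formula depends only on the data preserved by tableau equivalence, namely the branching structure of the tableau core, the disjunctive-vs-modal labelling of branching nodes, the literals at modal nodes, and the parity of infinite paths. Since tableau equivalence is defined precisely as a bisimulation on cores with respect to these features, semantic equivalence will follow at once.

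First I would introduce an auxiliary ``tableau game'' played between Even and Odd on a structure $\mathcal{M}$ and a tableau $\mathcal{T}$ of $\phi$. Positions are pairs $(s,n)$ where $s$ is a state of $\mathcal{M}$ and $n$ is a node of $\mathcal{T}$. At a disjunctive node Even chooses a successor in $\mathcal{T}$; at a modal node $n$ with set $\dmod\mathcal{B}$, Odd chooses either a formula $\psi\in\mathcal{B}$ together with an $R$-successor $s'$ of $s$ (matching the child of $n$ that tests $\psi$), or an $R$-successor $s'$ of $s$ and the child of $n$ labelled $\bigvee\mathcal{B}$; the literals at the modal node must be satisfied by $s$ for Odd not to lose immediately. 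The winning condition on infinite plays is that Even wins iff the corresponding path in the tableau has no $\mu$-trace (equivalently, is of even parity in the core).

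The second step is to prove that $\mathcal{M}\models\phi$ iff Even wins the tableau game on $\mathcal{M}$ and $\mathcal{T}$. This is essentially a bookkeeping reduction between the model-checking parity game $\mathcal{M}\times\phi$ and the tableau game: applications of the Boolean and fixpoint rules in the tableau track the decomposition of subformulas in $\mathcal{M}\times\phi$, applications of the modal rule track modal moves, and $\mu$-traces in the tableau correspond exactly to plays in $\mathcal{M}\times\phi$ whose most significant priority visited infinitely often is odd. Strategies transfer in both directions by following the correspondence node-by-node.

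The third step is almost immediate: the tableau game as defined above uses only the branching structure, the disjunctive/modal node type, the literals at modal nodes, and the parity of infinite paths. These are exactly the invariants of the tableau core, so if $\phi_1$ and $\phi_2$ are tableau equivalent then the bisimulation between the cores of their tableaux lifts, for any $\mathcal{M}$, to a bisimulation between the two tableau games on $\mathcal{M}$. Since bisimilar parity games have the same winner, Even wins one iff she wins the other, and therefore $\mathcal{M}\models\phi_1$ iff $\mathcal{M}\models\phi_2$.

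The main obstacle is the second step, and specifically the matching of winning conditions. The parity condition in $\mathcal{M}\times\phi$ is stated in terms of priorities assigned to $\mu$- and $\nu$-variables, while the tableau condition is stated in terms of $\mu$-traces. Showing these coincide requires a careful argument that along any infinite play, the most significant fixpoint variable regenerated infinitely often in the tableau coincides with the variable of highest priority visited infinitely often in the game, which in turn relies on the nesting condition in the definition of a priority assignment. Once this is pinned down, the rest of the proof is a routine transfer of strategies along bisimulations.
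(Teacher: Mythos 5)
First, a point of comparison: the paper does not prove this theorem at all --- it is imported wholesale from \cite{janin1995automata}. Your proposal can therefore only be measured against the standard argument in that source, and your overall architecture --- a satisfaction game played on $\mathcal{M}$ and $\mathcal{T}$ whose data are exactly the invariants preserved by tableau equivalence, an equivalence of that game with the model-checking game $\mathcal{M}\times\phi$, and a transfer of winning strategies along the bisimulation of cores --- is indeed that standard argument, and the third step is as immediate as you claim once the game is defined from core data only.

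Two points as written would fail, however. First, your move rule at modal nodes is wrong: if Odd chooses \emph{both} the formula $\psi\in\mathcal{B}$ \emph{and} the successor $s'$, then Even survives only when every successor satisfies every member of $\mathcal{B}$, which is the semantics of $\Box\bigwedge\mathcal{B}$, not of $\dmod\mathcal{B}$. The correct split is: Odd picks either a child of $n$ (a pair $\dmod\mathcal{B}'$, $\psi\in\mathcal{B}'$), to which Even answers with a witnessing successor $s'$, or Odd picks a successor $s'$, to which Even answers with a child. Note also that in this tableau system there is no separate child ``labelled $\bigvee\mathcal{B}$'': every child of a modal node carries one chosen $\psi$ together with $\bigvee\mathcal{B}$ for each \emph{other} modal formula in the label, so you are importing the shape of the model-checking game into the tableau. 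Second, ``strategies transfer node-by-node'' misidentifies the real difficulty of your step two. A single play of the tableau game is a path of $\mathcal{T}$, which carries \emph{many} traces, each corresponding to a different play of $\mathcal{M}\times\phi$, and the winning condition quantifies over all of them. The transfer from $\mathcal{M}\times\phi$ to the tableau game must therefore argue that a positional winning strategy for Even, applied at each tableau node to the unique formula being decomposed there, yields a path strategy \emph{all} of whose traces are plays consistent with the original strategy and hence not $\mu$-traces; the converse direction needs memory (the current tableau node) before positional determinacy is invoked. The priority-versus-trace bookkeeping you single out as the main obstacle is, by comparison, the routine part.
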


Note that tableau equivalence is a stricter notion than semantic equivalence; $\psi \vee \neg \psi$ and $\top$ have different tableau for example.

\subsection{Disjunctive normal form}

Disjunctive form was introduced in \cite{janin1995automata} as a
syntactic restriction on the use of conjunctions. It forces a formula
to follow a simple structure of alternating disjunctions and modalities
where modalities are qualified with a conjunction of propositions. Such
formulas are in many ways well-behaved and easier to manipulate than
arbitrary $L_\mu$ formulas.

\begin{definition}
\textit{(Disjunctive formulas) }The set of disjunctive   form formulas
of (unimodal) $L_{\mu}$ is the smallest set $\mathcal{F}$ satisfying:\end{definition}
\begin{itemize}
\item $\bot,\top$, propositional variables and their negations are in $\mathcal{F}$;
\item If $\psi\in\mathcal{F}$ and $\phi\in\mathcal{F}$ then $\psi\vee\phi\in\mathcal{F}$; 
\item If $\mathcal{A}$ is a set of literals and $\mathcal{B}\subseteq\mathcal{F}$
($\mathcal{B}$ is finite), then $\bigwedge\mathcal{A}\wedge\dmod\mathcal{B}\in\mathcal{F}$
;
\item $\mu X.\psi$ and $\nu X.\psi$ as long as $\psi\in\mathcal{F}$.
\end{itemize}

Every formula is known to be equivalent to an effectively computable
formula in  disjunctive form \cite{janin1995automata}. The transformation into
disjunctive form involves taking the formula's tableau decomposition
and compressing the node labels into a single subformula. The tricky
part is finding a tree with back edges and its priority assignment to represent the tableau finitely, including the parity of infinite paths. The transformation
then turns the tree with back edges into a disjunctive
formula with alternation depth dependent on the priority assignment. Conversely, a disjunctive formula and its minimal
priority assignment induces a tree with back edges representing its tableau.
The minimal priority function required to finitely represent a tableau is therefore equivalent
to the minimal alternation depth of a disjunctive formula generating
the tableau. The following theorem recalls the construction of disjunctive formulas from
trees with back edges labelled with priorities from \cite{janin1995automata}
and shows that the alternation depth of the resulting formula stems from the priority assignment
of the tree with back edges.

\begin{thm}\label{tree-to-formula}
 Let $\mathcal{A}=(A,\Omega)$ be a tree with back edges that is bisimilar to a core of the tableau $\mathcal{T}$ with priority assignment
$\Omega$ with co-domain $\{0...q\}$. Then
there is a disjunctive formula with alternation depth $\{0...q\}$ which generates a tableau equivalent to $\mathcal{T}$.
\end{thm}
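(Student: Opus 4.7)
The plan is to recover the formula from $\mathcal{A}$ by a standard unfolding that introduces one fresh fixpoint variable for every node that is the target of a back edge. Call such nodes \emph{heads}. For each head $n$ I would assign a fresh variable $X_n$, bound by $\nu$ if $\Omega(n)$ is even and by $\mu$ if $\Omega(n)$ is odd, so that the parity of the variable matches the parity of the priority annotation exactly as required by a valid priority assignment.

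The formula $\phi_n$ associated with each node $n$ of $A$ is then defined by induction on the spanning tree obtained by removing the back edges. A disjunctive node with children $n_1,\dots,n_k$ contributes $\phi_{n_1}\vee\cdots\vee\phi_{n_k}$. A modal node with literal set $\mathcal{L}$ and children $n_1,\dots,n_k$ contributes $\bigwedge\mathcal{L}\wedge\dmod\{\phi_{n_1},\dots,\phi_{n_k}\}$, which is disjunctive by construction. Back edges to a head $n$ are replaced by the variable $X_n$. Whenever $n$ is itself a head, the contribution is wrapped in $\sigma_n X_n.\,(\cdot)$ with $\sigma_n\in\{\mu,\nu\}$ chosen by the parity of $\Omega(n)$. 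To control alternation depth, I insert the $\sigma_n X_n$ binding not immediately at $n$ but at the smallest subformula whose nested heads all have priority at most $\Omega(n)$; equivalently, I bind heads in decreasing order of priority as one moves inward from the root of the formula. This yields a priority assignment that is consistent with $\Omega$ in the sense that $\Omega(X_n)=\Omega(n)$, and the co-domain is exactly $\{0,\dots,q\}$.

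The main obstacle is verifying the two global side conditions a priority assignment must satisfy: (i) if $\Omega(X)<\Omega(Y)$ then $X$ does not occur free in the binder of $Y$, and (ii) the resulting tableau of $\phi$ has the same core, including the parity of infinite branches, as $\mathcal{T}$. For (i), the placement rule for bindings -- outer bindings carry strictly larger priorities -- ensures that any free variable in a binder $\sigma_n X_n.\psi$ corresponds to a head that is still ``above'' $n$ in the binding order and thus has priority at least $\Omega(n)$; this needs a short combinatorial argument on the order in which back-edge targets are enclosed. For (ii), I would unfold the tableau of $\phi$ and exhibit the obvious bijection with $A$: disjunctive and modal nodes correspond step by step, literal sets at modal nodes agree by construction, and each infinite path in the tableau induces an infinite path through the heads of $A$ whose most significant priority equals the priority of the most significant fixpoint regenerating infinitely often, so odd paths coincide with $\mu$-traces. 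This last equality is the crucial correctness statement linking the parities assigned by $\Omega$ on $A$ to the $\mu$/$\nu$-traces of the tableau of the synthesised formula, and it is essentially the reason the construction of \cite{janin1995automata} works; the contribution here is the bookkeeping that makes the alternation depth land exactly on $\{0,\dots,q\}$.
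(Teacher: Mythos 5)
There is a genuine gap, and it sits exactly where you defer to ``a short combinatorial argument on the order in which back-edge targets are enclosed.'' No such argument exists for an arbitrary $\mathcal{A}$, because the hypothesis of the theorem does \emph{not} guarantee that back-edge targets occur in decreasing order of priority along branches. Concretely, take the tableau core of $\nu X.\mu Y.(A\wedge\dmod\{X\})\vee(\bar{A}\wedge\dmod\{Y\})$ (even iff $A$ occurs infinitely often). One valid tree with back edges for it, with co-domain $\{0,1,2\}$, is: a head $n$ of priority $1$, whose choice node offers a $\bar{A}$-loop back to $n$ or an $A$-step down to a head $m$ of priority $2$, whose choice node offers an $A$-loop back to $m$ or a $\bar{A}$-back-edge up to $n$. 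Every infinite path here has the right parity, yet the low-priority head sits \emph{above} the high-priority one. Your construction now faces a dilemma. If you keep the binder $\sigma_m X_m$ at $m$ (which is forced if regeneration of $X_m$ is to return the trace to $m$'s subformula, since in $L_\mu$ a variable regenerates to its \emph{binding} formula), you get $\mu X_n.\ldots\nu X_m.(\ldots X_m\ldots X_n\ldots)$, in which the outermost variable regenerating on the combined cycle is the $\mu$-variable $X_n$; that path is odd in the synthesised tableau but even in $\mathcal{T}$, so condition (ii) fails -- the formula is not even semantically equivalent, let alone within priority budget. If instead you hoist $\nu X_m$ outside $\mu X_n$ as your placement rule prescribes, regeneration of $X_m$ no longer returns to $m$ but to whatever node now carries the binder, which alters the branching structure of the generated tableau (extra choice nodes appear, modal nodes become choice nodes), so bisimilarity of cores is lost unless the new regeneration target happens to be bisimilar to $m$ -- and nothing in your argument establishes that.

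The missing idea is the paper's preprocessing step, which your proposal skips entirely: before reading off the formula, replace $\mathcal{A}$ by a \emph{bisimilar} tree with back edges $\mathcal{A}'=(A',\Omega')$ in which the targets of back edges occur in decreasing order of priority on every root-to-leaf path. This is obtained by unfolding $\mathcal{A}$ into the infinite core, first cutting every branch that sees the top priority $q$ infinitely often by a back edge to the closest-to-root \emph{bisimilar} ancestor of priority $q$, and then repeating for $q-1,q-2,\dots$ relative to the targets already created. Because each new back edge points to a bisimilar copy, $\mathcal{A}'$ still represents $\mathcal{T}$ and still uses only $\{0,\dots,q\}$; and the decreasing order along branches is precisely what makes the head-by-head binder placement (which you describe correctly) satisfy condition (i) with $\Omega$ itself as the priority assignment. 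With that step inserted -- and with the binders left exactly at the (new) back-edge targets rather than relocated -- the rest of your argument goes through.
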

\begin{proof}
 First of all, we construct  $\mathcal{A'}=(A',\Omega')$, bisimilar to  $\mathcal{A}$
but with a priority assignment with the following property: on all paths
from root to leaf, the priorities of nodes that are the targets of back edges occur in decreasing order. This is straight-forward
by looking at the infinite tableau core $\mathcal{A}$ unfolds into, remembering which nodes stem from the same node in $\mathcal{A}$ and
their priority assigned by $\Omega$. First consider all branches that see the highest priority $q$ infinitely often and cut them short by creating
back edges at nodes of priority $q$, pointing to the bisimilar ancestor node (also of priority $q$) that is closest to the root. Then repeat this for each priority
in decreasing order, but for each priority $q-1$ treat the ancestor of priority $q$ that back edges point to (if it exists) as the root, so that nodes that have back edges
pointing to them end up in decreasing order of priority. Note that every cycle is now dominated by the priority of the first node from the root seen infinitely often.

 The disjunctive formula is then obtained by assigning a subformula $f(n)$ to every node of $A$ as follows.
If $n$ is a leaf with literals $Q$, then $f(n)=\bigwedge Q$;
if $n$ is a disjunctive node with children $n_{0}$ and $n_{1}$, then $f(n)=f(n_{0})\vee f(n_{1})$;
if $n$ is the source of a back edge of which the target is $m$, then $f(n)=X_{m}$ where $X_{m}$ is a fixpoint variable;
if $n$ is a modal node, then $f(n)=\bigwedge Q\wedge\dmod\mathcal{B}$ where $Q$ is the set
of literals at $n$ and $\mathcal{B}$ is the set of $f(n_{i})$ for $n_{i}$ children of $n$;
other nodes inherit the formula assigned to their unique child.
If $n$ is the target of a back edge, $f(n)$ is obtained as previously detailed but in addition,
it binds the fixpoint variable $X_{n}$ with a $\nu$-binding if $n$ is of even parity and with a $\mu$-binding otherwise.

 If $r$ is the root node of $A'$, then $f(r)$ is
a disjunctive formula that generates a tableau that is equivalent to $\mathcal{T}$. This should be clear from the fact that
the tableau of $f(n)$ consists of the infinite tree generated by $A'$ and the labelling $L(n)=\{f(n)\}$ for all $n$. $\Omega'$ restricted
to the target of back edges is a priority assignment for the disjunctive formula $\Psi=f(n)$ since it respects the
parity of paths and on each branch the priorities occur in decreasing order. This guarantees that if $\Omega'(X)<\Omega'(Y)$
then $X$ is not free in the formula binding $Y$.

Therefore $\Psi$ has a tableau that is equivalent to $\mathcal{T}$ and accepts a priority assignment with co-domain $\{0...q\}$.
\end{proof}

Conversely, a disjunctive formula induces a tree with back edges generating its tableau by taking its tableau
until each branch reaches a fixpoint variable which is the source of a back edge to its binding formula. The priority assignment of the formula is also a priority assignment
for the tree with back-edges. This yields a one-to-one correspondence between trees with back edges and disjunctive formulas.

\section{Tableau equivalence preserves alternation depth for disjunctive $L_{\mu}$}

This section argues that all disjunctive formulas generating the same tableau $\mathcal{T}$
have the same alternation depth. The structures used to identify the alternation depth are
similar to ones found in \cite{rabin-index} to compute the Rabin-Mostowski index of a parity games and
the flowers described in \cite{flowers} to find the Rabin-Mostowski index of non-deterministic automata.
Here I show that tableau equivalence preserves these structures and consequently also the alternation depth of disjunctive formulas. \\

 Definition \ref{witness-def} describes a witness
showing that the priority assignment $\Omega$ of a tree with back edges $\mathcal{A}=(A,\Omega)$ representing
$\mathcal{T}$ requires at least $q$ priorities. This witness is preserved
by bisimulation with respect to node type, literals
and parity of infinite branches. Since all finite representations
of a tableau $\mathcal{T}$ are bisimilar with respect to these criteria,
they all have the same maximal witness, indicating the least number
of priorities $\mathcal{T}$ can be represented with.

Informally, the witness of strictness is a series of cycles of alternating
parity where each cycle is contained within the next.

\begin{definition}\label{witness-def} 
\emph{($q$-witness)} A $q$-witness in a tree with back edges $(A,\Omega)$ representing a tableau $\mathcal{T}$ consists of
$q$ cycles $c_{1}...c_{q}$ such that for each $i\leq q$, the cycle
$c_{i}$ is of the parity of $i$ and for all $0<i<q$, the cycle $c_{i}$ is a subcycle of $c_{i+1}$.
\end{definition}

\begin{lem}\label{priorities}
 If a tree with back edges $(A,\Omega)$ has a $q$-witness, then the co-domain of the priority assignment $\Omega$
has at least $q$ elements.
\end{lem}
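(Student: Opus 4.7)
The plan is to track the maximum priority along each cycle in the witness and show that these maxima form a strictly increasing sequence, yielding $q$ distinct elements in the co-domain of $\Omega$.

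First I would define, for each cycle $c_i$ in the witness, the value $p_i$ to be the highest priority assigned by $\Omega$ to any node of $c_i$. Since a cycle, when traversed infinitely, visits every one of its nodes infinitely often, the parity of $c_i$ (as an infinite path in the unfolding of $\mathcal{A}$) is exactly the parity of $p_i$. By the definition of a $q$-witness, $c_i$ has parity $i$, so $p_i$ has the same parity as $i$. In particular, consecutive $p_i$ and $p_{i+1}$ have opposite parities.

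Next I would exploit the nesting condition. Because $c_i$ is a subcycle of $c_{i+1}$, every node visited by $c_i$ also appears on $c_{i+1}$, so every priority occurring on $c_i$ also occurs on $c_{i+1}$. In particular $p_i$ occurs on $c_{i+1}$, which forces $p_{i+1} \geq p_i$. Since $p_{i+1}$ and $p_i$ differ in parity, they cannot be equal, so the inequality is strict: $p_{i+1} > p_i$.

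Iterating this for $i = 1, \ldots, q-1$ gives a strictly increasing chain $p_1 < p_2 < \cdots < p_q$ of priorities in the co-domain of $\Omega$. These are $q$ distinct elements, so the co-domain must contain at least $q$ values, proving the lemma. There is no real obstacle here; the only subtle point is the observation that the parity of a cycle is literally the parity of its highest priority (not, e.g., the priority seen infinitely often after some transient), which follows directly from the fact that every node of the cycle is revisited on each traversal.
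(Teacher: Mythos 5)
Your proof is correct and follows essentially the same route as the paper's: both arguments observe that the dominant (highest) priority on $c_{i+1}$ must be at least that on $c_i$ because $c_i$ is contained in $c_{i+1}$, and strictly greater because the two cycles have opposite parities, yielding a strictly increasing chain of $q$ priorities. Your version merely spells out the details (the identification of a cycle's parity with the parity of its maximum priority) a little more explicitly than the paper does.
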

\begin{proof}
 Given a $q$-witness $c_{1}...c_q$, for every pair of cycles $c_i$ and $c_{i+1}$, since they are of different parity and $c_i$
is contained in $c_{i+1}$, the dominant priority on $c_{i+1}$ must be strictly larger than the dominant priority on $c_i$. Therefore
there must be at least $q$ priorities in the cycle $c_q$ which contains all the other cycles of the witness.  
\end{proof}

\begin{lem}\label{lem:necessity}
 If a tree with back edges $\mathcal{A}$ representing a tableu $\mathcal{T}$ 
does not have a $q$-witness, then there is an tree with back edges $\mathcal{A}'$ which
also represents $\mathcal{T}$ but has a priority assignment with fewer priorities.\end{lem}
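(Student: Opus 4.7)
The plan is to produce, from $\mathcal{A} = (A, \Omega)$, a new tree with back edges $\mathcal{A}' = (A, \Omega')$ on the same underlying tree, thereby preserving bisimulation with the core of $\mathcal{T}$ at the structural level. The work then reduces to defining $\Omega'$ so that the parity of every infinite path is preserved while the co-domain of $\Omega'$ is strictly smaller than that of $\Omega$. The absence of a $q$-witness will bound the size of this co-domain.

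The new priorities would be assigned only at back-edge targets, with other nodes receiving a default minimum priority of the correct parity, since they contribute nothing to the dominant priority of any cycle. For each back-edge target $n$, I would define a rank $r(n)$ as the largest $k$ such that there is a nested chain of alternating-parity cycles $c_1 \subsetneq \cdots \subsetneq c_k$ (with $c_i$ of the parity of $i$) whose outermost cycle $c_k$ passes through $n$. By the hypothesis on $\mathcal{A}$, $r(n) < q$ for every $n$. I would then set $\Omega'(n) = r(n)$ when this value has the parity required by the original parity of the infinite paths cycling at $n$, and $\Omega'(n) = r(n) - 1$ otherwise; in both cases the result has the correct parity for the dominant cycle at $n$.

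To verify that $\Omega'$ is legitimate and that $(A, \Omega')$ still represents $\mathcal{T}$, I would check two points. First, on every cycle of $A$ the maximum $\Omega'$-priority has the cycle's original parity, because the witness realising $r(n)$ at the outermost back-edge target on the cycle is by construction a tower of alternating parities topped by the required one. Second, the nesting constraint on priorities is satisfied: if $n$ is a back-edge target nested inside the scope of another back-edge target $m$, then any witness through $n$ extends to a witness through $m$ by prepending the enclosing cycle, so $r(n) \leq r(m)$, and the parity corrections preserve this inequality. Together these give a valid priority assignment whose co-domain is strictly contained in that of $\Omega$.

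The main obstacle is the parity adjustment in the definition of $\Omega'(n)$: the witness length fixes $r(n)$'s parity, whereas the cycles through $n$ may have a different parity, so occasionally one must subtract one. I expect the subtle point to be showing that this correction is compatible with the nesting monotonicity across the tree, which calls for an inductive argument sweeping over back-edge targets from innermost to outermost and verifying that the corrected values still satisfy the scope constraints. Once this is verified, the resulting $\mathcal{A}'$ witnesses the desired reduction in the number of priorities.
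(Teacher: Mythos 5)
Your plan---keep the underlying tree and recompute priorities from the lengths of nested alternating chains of cycles---is in the spirit of known Rabin--Mostowski index computations and differs from the paper's proof, which works top-down: it isolates the sets $S_i$ of priority-$i$ nodes that sit inside maximal alternating chains descending from priority $q$ and shifts exactly those nodes down by two. Your route can be made to work, but your rank function is too coarse, and the gap is precisely your first verification claim. Because you only require the outermost cycle $c_k$ of the witness chain to \emph{pass through} $n$, the rank $r$ is constant on every strongly connected piece of $\mathcal{A}$: if $m$ lies on a common cycle $d$ with $n$, then replacing $c_k$ by $c_k\cup d$ (when the parity is unchanged) or appending $c_k\cup d$ (when it flips) yields a chain of length at least $r(n)$ whose outermost cycle passes through $m$, so $r(m)\ge r(n)$ and by symmetry $r(m)=r(n)$. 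Hence on such a piece $\Omega'$ takes only the two values $r$ and $r-1$, decided by a per-node parity bit, and a cycle whose $\Omega$-dominant node has parity different from $r$ but which merely passes through some node assigned the value $r$ acquires the wrong dominant parity. The claim that ``the witness realising $r(n)$ at the outermost back-edge target on the cycle is topped by the required parity'' is exactly what fails: that witness need have nothing to do with the cycle under consideration.

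Concretely, take back-edge targets $w,a,v$ on one tree branch with $\Omega(w)=3$, $\Omega(a)=2$, $\Omega(v)=1$, an intermediate node $x$ below $v$ with $\Omega(x)=3$, and back edges producing the cycles $\{a,v,s_a\}$ (dominated by $a$, even), $\{v,x,s_v\}$ (dominated by $x$, odd) and $\{w,a,v,s_w\}$ (dominated by $w$, odd). No even cycle contains an odd one, so there is no $2$-witness and a fortiori no $3$-witness, the hypothesis of the lemma holds with $q=3$, and every node has rank $r=1$. Your rule, which corrects the parity to that of the cycles $n$ dominates (i.e.\ to the parity of $\Omega(n)$), sets $\Omega'(v)=1$ and $\Omega'(a)=0$, turning the even cycle $\{a,v,s_a\}$ odd; note also that ``the parity of the infinite paths cycling at $v$'' is not even well defined here, since $v$ lies on cycles of both parities without dominating either. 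The repair is to anchor the rank to dominance: let $g(n)$ be the largest $k$ admitting a chain $c_1\subsetneq\cdots\subsetneq c_k$ with $c_i$ of the parity of $i$ and with $n$ the $\Omega$-\emph{dominant} node of $c_k$ (and $g(n)=0$ otherwise). Then $g(n)$ automatically has the parity of $\Omega(n)$ whenever it is positive, no correction is needed, absence of a $q$-witness gives $g\le q-1$, and for any cycle $c$ with dominant node $a$ one gets $g(a)\ge g(v)$ for all $v\in c$ by adjoining $c\cup c_{g(v)}$ to the chain witnessing $g(v)$, either replacing its top cycle or extending it by one according to the parity of $\Omega(a)$. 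Your proposal is missing exactly this dominance condition, and no after-the-fact parity correction of the coarser rank can recover it.
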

\begin{proof}
Assume a tree with back edges $\mathcal{A}=(A,\Omega)$ representing $\mathcal{T}$
with a priority assignment with co-domain $\{0...q\}$ does not have a $q$ witness. Let $S_{q}$
be the set of nodes of priority $q$. Let $S_{i-1}$ for $1<i\leq q$
be the set of nodes of priority $i-1$ which appear as the second
highest priority in a cycle where all the nodes of highest priority
are in $S_{i}$, and as the nodes of highest priority in some cycle. Note that if $S_1$ was non-empty, then there would
be a $q$-witness, so $S_1$ and consequently $S_0$ must be empty. Then define a new priority
function as follows: the new priority function $\Omega'$ is as $\Omega$, except
for nodes in any $S_{i}$ -- these receive the priority $i-2$ instead
of the priority $i$. Since $S_{1}$ and $S_{0}$ are empty, this
is possible whilst keeping all priorities positive. 
 $\Omega'$ with co-domain $\{0...q-1\}$ preserves
the parity of infinite branches since there are no cycles in which
the priority of all dominant nodes is decreased more than the priority
of all sub-dominant nodes and each node retains the same parity. Therefore,
if a finite representation of $\mathcal{T}$ does not have a $q$-witness, then there is
a finite representation $\mathcal{A}'=(A,\Omega')$ with a smaller priority assignment. \end{proof}
\begin{lem}\label{sufficiency}
 All tableau equivalent trees with back edges have the same $q$-witnesses: for all $q$, either all or
none of the trees with back edges representing a same tableau $\mathcal{T}$ have a $q$-witness.
 \end{lem}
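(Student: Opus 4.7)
The plan is to show that the existence of a $q$-witness is a bisimulation invariant of trees with back edges: since tableau equivalent trees with back edges are, by definition, bisimilar through their unfoldings to the tableau core in a way that respects branching structure, node types, literals, and the parity of infinite paths, transferring witnesses along this bisimulation will prove the lemma.

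Let $\mathcal{A}_1$ and $\mathcal{A}_2$ represent the same tableau $\mathcal{T}$, and suppose $\mathcal{A}_1$ admits a $q$-witness $c_1\subseteq\dots\subseteq c_q$. I would first pick a node $u$ on $c_1$; by the node-containment interpretation of $\subseteq$ used in Lemma \ref{priorities}, $u$ lies on every $c_i$. Let $u'$ be any node of $\mathcal{A}_2$ bisimilar to $u$. For each $i$, view $c_i$ as a closed walk at $u$ and simulate it step by step from $u'$ using the bisimulation. Since $\mathcal{A}_2$ is finite, only finitely many of its nodes are bisimilar to $u$, so iterating this simulation sufficiently many times must return to $u'$ itself, yielding a closed walk $C'_i$ at $u'$ in $\mathcal{A}_2$. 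Iterating $c_i$ forever gives an infinite path of parity $i$ in the unfolding of $\mathcal{A}_1$, whose bisimilar image in $\mathcal{A}_2$ is the eventually periodic path traced by $C'_i$, so preservation of parity forces the dominant priority along $C'_i$ to have parity $i$.

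To enforce the nesting, I build the witness inductively: set $\tilde{C}'_1=C'_1$, and, given $\tilde{C}'_i$ of parity $i$ passing through $u'$, form $\tilde{C}'_{i+1}$ by concatenating alternating repetitions of $\tilde{C}'_i$ and $C'_{i+1}$, both closed walks at $u'$. The resulting closed walk visits every node of $\tilde{C}'_i$ together with nodes of $C'_{i+1}$, so $\tilde{C}'_{i+1}$ contains $\tilde{C}'_i$ as a subcycle. Its parity matches that of the corresponding infinite $\mathcal{A}_1$-path alternating $c_i$- and $c_{i+1}$-iterations, which stays entirely inside $c_{i+1}$ by the nesting hypothesis and therefore has parity $i+1$; bisimulation transports this parity to $\mathcal{A}_2$, which in turn forces the dominant priority on $\tilde{C}'_{i+1}$ to be strictly higher than on $\tilde{C}'_i$ and of opposite parity. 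After $q$ steps this produces the desired $q$-witness in $\mathcal{A}_2$, and symmetry completes the proof.

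The main obstacle is this final parity bookkeeping: producing a single cycle of prescribed parity in $\mathcal{A}_2$ from bisimulation alone is routine, but arranging the cycles to sit inside one another while preserving dominant-parity alternation requires composing the walks as above and importing the nesting $c_i\subseteq c_{i+1}$ from $\mathcal{A}_1$ to certify the parity of the extended cycle in $\mathcal{A}_2$. All other components — the pigeonhole that closes the simulated walk, and the bisimulation-preservation of parity of infinite paths — are standard once this composition is set up.
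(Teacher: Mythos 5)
Your overall strategy coincides with the paper's: reduce the lemma to bisimulation-invariance of $q$-witnesses, pick a node $u$ common to all cycles of the witness, simulate each cycle from a bisimilar node of $\mathcal{A}_2$, close the simulated walks into cycles by finiteness, and recover the nesting by combining closed walks at a common base point. Your final parity bookkeeping (certifying the parity of $\tilde{C}'_{i+1}$ by transporting the infinite path that alternates $c_i$- and $c_{i+1}$-iterations, which stays inside $c_{i+1}$) is sound and corresponds to the paper's appeal to the fact that bisimulation respects the parity of cycles.

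The gap is in the sentence ``iterating this simulation sufficiently many times must return to $u'$ itself.'' Finiteness and pigeonhole only give you that the simulated walk eventually revisits \emph{some} node bisimilar to $u$ at the same phase of $c_i$, yielding a closed walk at that node --- not necessarily at $u'$. For instance, if $u'$ has a descendant $v$ bisimilar to $u$ that carries a back edge to itself while no back edge below $v$ reaches back up to $u'$, the simulation can leave $u'$ once and then cycle at $v$ forever. Worse, the node at which the walk closes up can differ for different indices $i$, and then your concatenation step collapses: $\tilde{C}'_i$ and $C'_{i+1}$ would be closed walks at different base points and could not be composed, so the nesting $\tilde{C}'_i \subseteq \tilde{C}'_{i+1}$ is not obtained. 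This is exactly the point at which the paper does something you omit: it chooses $n'$ to be a \emph{deepest} node of $\mathcal{A}'$ bisimilar to $n$, so that every node bisimilar to $n$ reachable from $n'$ is an ancestor of $n'$, and it uses this maximality to argue that all the transferred cycles contain the single node $n'$ and can therefore be combined. Your argument can be repaired by making the same choice (or by re-basing all $q$ simulated walks at one bisimilar node that recurs in all of them), but as written the existence of a common base point is asserted rather than proved.
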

\begin{proof}
First we recall that if $\mathcal{A}$ is the finite representation
of $\mathcal{T}$ induced by a disjunctive formula $\Psi$ then the
tableau of $\mathcal{T}$  is an infinite tree bisimilar to $\mathcal{A}$
with respect to node type, literals and parity of infinite
branches. Hence any finite representation of $\mathcal{T}$ is bisimilar
to $\mathcal{A}$. It then suffices to show that
$q$-witnesses are preserved under bisimulation. This is straight-forward:
 let $\mathcal{A}'$ be bisimilar to a finite tree with back edges $\mathcal{A}$ with respect to node type,
literals at modal nodes and the parity of infinite paths. Then
infinite paths in $\mathcal{A}$ are bisimilar to infinite paths in
$\mathcal{A}'$. Since both $\mathcal{A}$ and $\mathcal{A}'$ are
finite, an infinite path stemming from a cycle in $\mathcal{A}$ is
bisimilar to a cycle in $\mathcal{A}'$. 
A $q$-witness contains at least one node which lies on all the cycles of the witness.
If $\mathcal{A}$ has  $q$ cycles, call the node on all of its cycles $n$ and
consider (one of) the deepest node(s) $n'$ in $\mathcal{A}'$ bisimilar to $n$. That is to say, choose
$n'$ such that if another node bisimilar to $n'$ is reachable from $n'$, it must be an ancestor of $n'$.
Since $n'$ is bisimilar to $n$, there must be a cycle $c_i'$ bisimilar to each $c_i$ reachable from $n'$.
Since $n'$ is maximally deep, it is contained in each of these cycles $c_i'$. Then, a
$q$-witness can be reconstructed in $\mathcal{A}'$ by taking the
cycle $c_1'$, and then for each $i>0$ the cycle consisting of all $c_j', j\leq i$ . Since all $c_i'$ cycles have
$n'$ in common, there is a cycle combining $c_j', j\leq i$ for any $i$.
 Since bisimulation
respects the parity of cycles, this yields a $q$-witness in $\mathcal{A}'$.\end{proof}
\begin{thm}\label{thm:same}
All disjunctive formulas with tableau $\mathcal{T}$ have the same alternation depth.\end{thm}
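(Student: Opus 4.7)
The plan is to assemble the three preceding lemmas via the one-to-one correspondence between disjunctive formulas and trees with back edges. Let $\Psi_1$ and $\Psi_2$ be two disjunctive formulas whose tableaus are both (equivalent to) $\mathcal{T}$. First I would invoke the one-to-one correspondence recalled just after Theorem \ref{tree-to-formula} to associate to each $\Psi_i$ a tree with back edges $\mathcal{A}_i$ representing $\mathcal{T}$, where the priority assignment of the formula is also a priority assignment of $\mathcal{A}_i$. In particular, for a disjunctive formula $\Psi$, its alternation depth is exactly the smallest $q$ such that the tree with back edges induced by $\Psi$ admits a valid priority assignment with co-domain $\{0,\ldots,q\}$: any valid priority assignment for the formula yields one for the tree, and by Theorem \ref{tree-to-formula} any valid priority assignment for the tree yields a tableau-equivalent disjunctive formula with that many priorities.

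Next, I would characterise this minimum purely in terms of the combinatorics of the tree. Combining Lemma \ref{priorities} and Lemma \ref{lem:necessity}, the minimum co-domain size required by a tree with back edges representing $\mathcal{T}$ is exactly the largest $q$ for which that tree admits a $q$-witness: Lemma \ref{priorities} gives the lower bound, and Lemma \ref{lem:necessity} iteratively reduces the number of priorities until no further $q$-witness is available.

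Finally, I would apply Lemma \ref{sufficiency}: since $\mathcal{A}_1$ and $\mathcal{A}_2$ are both trees with back edges representing $\mathcal{T}$, they admit exactly the same $q$-witnesses, hence the same maximal $q$. Combining with the previous paragraph, $\Psi_1$ and $\Psi_2$ have the same alternation depth, proving the theorem.

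The proof is essentially bookkeeping: the real content lies in the three lemmas, and the only step that requires care is the first, namely justifying that the alternation depth of a disjunctive formula coincides with the minimal priority-assignment size of its induced tree with back edges. This could be a subtle point because a formula can in principle admit several valid priority assignments of different sizes; however, since the underlying tree structure of $\mathcal{A}_\Psi$ is determined by the tableau and only the priorities vary, moving between priority assignments of $\Psi$ and of $\mathcal{A}_\Psi$ is transparent via Theorem \ref{tree-to-formula} and its converse. I do not anticipate any genuine obstacle beyond stating this correspondence cleanly.
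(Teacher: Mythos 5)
Your proposal is correct and follows essentially the same route as the paper: combine Lemma \ref{sufficiency} (tableau-equivalent trees with back edges share the same $q$-witnesses) with Lemmas \ref{priorities} and \ref{lem:necessity} to pin down the minimal priority co-domain, then transfer this to the formulas via the correspondence between disjunctive formulas and trees with back edges. Your explicit attention to why the formula's alternation depth equals the tree's minimal priority-assignment size is a point the paper states more tersely, but the argument is the same.
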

\begin{proof}
All trees with back edges representing the same tableau $\mathcal{T}$ have the same maximal witness, from the previous lemma, so from Lemma \ref{lem:necessity}
they accept a minimal priority function with domain $\{0...q\}$.
Since a disjunctive formula induces a tree with back edges with a minimal priority function corresponding to the formula's alternation depth,
any two disjunctive formulas that are tableau equivalent must have the same alternation depth.
\end{proof}

This concludes the proof that tableau equivalence preserves alternation depth on disjunctive formulas.
The restriction to disjunctive formulas is crucial: as the next section shows, in the general case tableau
equivalent formulas may have vastly different alternation depths.

\section{Disjunctive form does not preserve alternation depth}

Every formula has a tableau which allows it to be turned into a semantically
equivalent disjunctive formula. This section studies the relationship
between a formula's alternation depth and the alternation
depth of its tableau equivalent disjunctive form. As the previous
section shows, any two disjunctive formulas with the same tableau
have the same alternation depth; therefore comparing a non-disjunctive
formula to any tableau equivalent disjunctive formula will do.

The first subsection demonstrates that not only does disjunctive
form not preserve alternation depth, but also that there is no hope
for bounding the alternation depth of disjunctive formulas with respect
to their semantic alternation depth: for any $n$ there are one alternation
formulas which are tableau equivalent to $n$ alternation disjunctive formulas.
In other words, the alternation depth of a $L_\mu$ formula, when transformed
into disjunctive form, can be arbitrarily large. Conversely, as shown in
the second subsection, formulas of arbitrarily large alternation depth
can be tableau equivalent to a disjunctive formula without alternations.
Hence the alternation depth of tableau equivalent formulas are only related
within the disjunctive fragment.

\subsection{Disjunctive formulas with large alternation depth}

While the main theorem is proved by Example \ref{ex-3}, the Examples \ref{ex:simple} and
\ref{exa:ex-2} leading up to it should give the interested reader some intuition about the
mechanics which lead the tableau of a formula to have higher alternation
depth than one might expect.
\begin{example}\label{ex:simple}
The first example is a rather simple one: a disjunctive formula with
one alternation that can be expressed in non-disjunctive form without
any alternations. The disjunctive formula $\nu X.\mu Y.(A\wedge\dmod\{X\})\vee(\bar{A}\wedge\dmod\{Y\})$
signifies that all paths are infinite and $A$ occurs infinitely often
on all paths. Compare it to the formula $\nu X.\dmod\{X\}\wedge\mu Y.(\bar{A}\wedge\dmod\{Y\})\vee A$ which is alternation free.

The tableaus of both these formulas are shown side by side in Figure \ref{fig:simple}.
Both branches regenerate into either exactly the ancestral node marked * or a node that
reaches a node identical to the one marked * in a single non branching step.\\
The cores of the two tableaus, that is to say their branching nodes, are clearly isomorphic with respect to
the node type and branching structure. Furthermore, for both formulas, there is $\mu$-trace on any path that
only goes through the left hand branch infinitely often. There is no $\mu$ trace on \emph{any} path that goes
through the right hand path infinitely often, for either formula. As a result, both tableaus agree on the parity of infinite branches.
The two formulas are tableau equivalent and therefore also semantically equivalent.
\end{example}

\begin{remark}
  Observe that there is nothing obviously inefficient about how the disjunctive formula handles alternations.
Indeed, simply inverting the order of the fixpoints yields a formulas
which can not be expressed without an alternation: $\mu X.\nu Y.A\wedge\dmod\{X\}\vee\bar{A}\wedge\dmod\{Y\}$. 
\end{remark}

\begin{figure}\label{fig:simple}

\begin{prooftree}

\AxiomC{\textbf{*}}
\UnaryInfC{$Y,X$\hspace{4mm}
$\mathbf{W}$}

\UnaryInfC{$\dmod \{Y\}, \bar{A}, \dmod \{X\}$\hspace{4mm}
$\mathbf{ \bar{A}, \dmod \{W\}}$}
\UnaryInfC{$\dmod \{Y\}, \bar{A}\wedge \dmod \{X\}$\hspace{4mm}
$\mathbf{ \bar{A}\wedge \dmod \{W\}}$}

\AxiomC{\textbf{*}}
\UnaryInfC{$Y$\hspace{4mm}
$\mathbf{Z}$} 

\UnaryInfC{$\dmod \{Y\}, A$\hspace{4mm}
$\mathbf{A, \dmod\{Z\}}$} 

\UnaryInfC{$\dmod \{Y\}, A$ \hspace{4mm}
$\mathbf{A\wedge \dmod\{Z\}}$}

\BinaryInfC{\textbf{*}\hspace{3mm}$\dmod \{Y\} , (\bar{A} \wedge \dmod \{X\}) \vee A$ \hspace{4mm}
$\mathbf{(\bar{A}\wedge \dmod \{W\}) \vee (A\wedge \dmod\{Z\})}$}
\UnaryInfC{$\nu Y.\dmod \{Y\} \wedge \mu X. (\bar{A} \wedge \dmod \{X\}) \vee A$ \hspace{4mm}
$\mathbf{\nu Z.\mu W. (\bar{A}\wedge \dmod \{W\}) \vee (A\wedge \dmod\{Z\})}$}

\end{prooftree}

\caption{Tableaus for $\nu Y.\dmod \{Y\} \wedge \mu X. (\bar{A} \wedge \dmod \{X\}) \vee A$ and $\mathbf{\nu Z.\mu W.  (\bar{A}\wedge \dmod \{W\}) \vee (A\wedge \dmod\{Z\})}$}

\end{figure}

While the above example proves that disjunctive form does not preserve
alternation, it must be noted that the alternating parity automata
corresponding to these formulas require in both cases two priorities,
although only one requires an alternation. The next example shows
formulas in which the number of priorities is not preserved either.

\begin{example}\label{exa:ex-2}
This example and the following ones will be built
on one-alternation formulas consisting of single $\mu/\nu$ alternations
embedded in one another without interfering with each other, i.e. all
free variables within the inner formula $\phi_{1}$ are bound by the
inner fixpoint bindings. This means that the formula accepts a priority
assignment with co-domain $\{0,1\}$. Without further ado,
consider the formula in question:
$$\alpha=\mu X_{0}.\nu Y_{0}.(A\wedge\dmod\{X_{0}\})\vee(B\wedge\dmod\{Y_{0})\wedge\mu X_{1}.\nu Y_{1}.(C\wedge\dmod\{X_{1}\})\vee(D\wedge\dmod\{Y_{1}\})\vee E$$
\end{example}
The following Lemma shows it to be equivalent to a formula which requires a priority assignment with co-domain $\{0...3\}$.
\begin{lem}\label{lem:equivalence}
The formula
$\alpha$ is tableau equivalent to a disjunctive formula which requires a parity assignment with co-domain $\{0...3\}$:
\begin{equation}
\begin{split}
\beta = \mu X_{0}.\nu Y_{0}.\mu X_{1}.\nu Y_{1}.(A\wedge C\wedge\dmod\{X_{0}\})\vee(A\wedge D\wedge\dmod\{X_{0}\})\vee(A\wedge E\wedge\dmod\{X_{0}\}) \\
\vee(B\wedge E\wedge\dmod\{Y_{0}\})\vee(B\wedge C\wedge\dmod\{X_{1}\})\vee(B\wedge D\wedge\dmod\{Y_{1}\})
\end{split}
\end{equation}
\end{lem}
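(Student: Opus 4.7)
The plan has two parts. First, I would show that $\beta$ has alternation depth $3$: its minimal priority assignment has co-domain $\{0,\ldots,3\}$. Second, I would exhibit a bisimulation between the tableau cores of $\alpha$ and $\beta$.

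For the alternation depth, I would inspect the free variables of each binding of $\beta$. All of $X_0, Y_0, X_1$ occur free in the body of $\nu Y_1$; both $X_0, Y_0$ occur free in the body of $\mu X_1$; and $X_0$ occurs free in the body of $\nu Y_0$. The priority-assignment constraint $\Omega(X)\geq\Omega(Y)$ whenever $X$ is free in the body binding $Y$, combined with the parity requirements ($X_i$ odd, $Y_i$ even), forces a strictly decreasing chain $\Omega(X_0)>\Omega(Y_0)>\Omega(X_1)>\Omega(Y_1)$ of alternating parities, whose minimum values are $\Omega(Y_1)=0$, $\Omega(X_1)=1$, $\Omega(Y_0)=2$, $\Omega(X_0)=3$. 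By Theorem~\ref{thm:same}, every disjunctive formula tableau-equivalent to $\beta$ inherits this alternation depth.

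For the tableau equivalence, I would compute the two tableaux side by side. The tableau of $\beta$ follows directly from its disjunctive structure: unfolding the four fixpoints reaches a single disjunctive branching point with six modal arms, whose literal sets are the six combinations $\{A,B\}\times\{C,D,E\}$ and whose single back-edges target respectively $X_0, X_0, X_0, Y_0, X_1, Y_1$. The tableau of $\alpha$ begins by unfolding the outer $\mu X_0 \nu Y_0$ and splitting on $\chi$ into the $A$-modal arm (literal $A$, modality $\dmod\{X_0\}$ regenerating $\alpha$) and the conjunction $\{B,\dmod\{Y_0\},\psi\}$; unfolding $\psi=\mu X_1\nu Y_1\chi'$ inside this branch produces a further three-way disjunction into modal arms $\{B,C,\dmod\{Y_0\},\dmod\{X_1\}\}$, $\{B,D,\dmod\{Y_0\},\dmod\{Y_1\}\}$, and $\{B,E,\dmod\{Y_0\}\}$.

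The main technical obstacle is to show that, once the modal rule is applied and its successor labels are re-expanded, the tableau cores of $\alpha$ and $\beta$ match up to bisimulation. Applying the modal rule to a node such as $\{B,C,\dmod\{Y_0\},\dmod\{X_1\}\}$ produces a successor labelled $\{Y_0,X_1\}$, which unfolds to $\{\chi,\chi'\}$; expanding both disjunctions yields exactly the six-way cross-product that appears at the top of $\beta$'s tableau, so the local branching after two regeneration steps in $\alpha$ matches one regeneration step in $\beta$. I would pair each modal arm with the single fixpoint variable it ultimately regenerates and verify that the literals at corresponding modal nodes coincide. Finally, the parities of infinite paths must be checked branch by branch: a path stabilising on the $A$-branch carries a $\mu$-trace on $X_0$, aligning with $\beta$'s $\dmod\{X_0\}$ arms of priority $3$; a path stabilising on the $B$-and-$D$ sub-branch carries a $\nu$-trace on $Y_1$, matching $\beta$'s $\dmod\{Y_1\}$ arm of priority $0$; and the $B$-and-$C$ and $B$-and-$E$ sub-branches analogously match $\beta$'s $\dmod\{X_1\}$ and $\dmod\{Y_0\}$ arms of priorities $1$ and $2$, completing the bisimulation.
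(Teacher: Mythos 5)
Your overall strategy (compute both tableaux, exhibit a bisimulation of their cores, then check the parities of infinite paths) is the same as the paper's, and your first step --- deriving the forced chain $\Omega(X_0)>\Omega(Y_0)>\Omega(X_1)>\Omega(Y_1)$ for $\beta$, hence minimal co-domain $\{0,\dots,3\}$ --- is correct and a worthwhile addition. However, there are two genuine gaps. First, you have mis-parsed $\alpha$. The intended reading (made explicit by the paper's tableau for $\alpha$ in Figure \ref{fig:alpha} and by the parenthesisation of $a_{i+1}$ in Example \ref{ex-3}) is $\mu X_0.\nu Y_0.\bigl((A\wedge\dmod\{X_{0}\})\vee(B\wedge\dmod\{Y_{0}\})\bigr)\wedge\psi$ with $\psi=\mu X_1.\nu Y_1.(C\wedge\dmod\{X_1\})\vee(D\wedge\dmod\{Y_1\})\vee E$ conjoined \emph{outside} both disjuncts. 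The root choice node of $\alpha$'s tableau therefore carries the two disjunctions side by side and branches six ways, with modal nodes labelled by the literal pairs $\{A,C\},\{A,D\},\{A,E\},\{B,C\},\{B,D\},\{B,E\}$. Under your parse the $A$-arm is a modal node with literal set $\{A\}$ alone and the root branches four ways; such a core cannot be bisimilar to $\beta$'s, whose modal nodes all carry two literals, so the lemma would be false for the formula you actually analysed. Relatedly, your claim that ``two regeneration steps in $\alpha$ match one regeneration step in $\beta$'' is not something tableau equivalence tolerates: the bisimulation must match branching nodes one for one, respecting node type and the literals at modal nodes.

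Second, your parity check only treats paths that stabilise on a single branch. The substance of the paper's proof is precisely the mixed cases: a path may visit several of the six arms infinitely often, and whether it admits a $\mu$-trace depends on the combination. For instance, a path visiting both the $B\wedge E$ arm (which regenerates only $Y_0$, since choosing $E$ terminates the inner fixpoints) and the $B\wedge C$ arm (which regenerates $Y_0,X_1$) infinitely often has no $\mu$-trace in $\alpha$'s tableau, because no single trace survives through the $Y_0$-only regeneration node while seeing $X_1$ infinitely often; this must be matched against $\beta$, where the same path sees priorities $1$ and $2$ infinitely often and the maximum is even. The paper works through exactly these combinations: the three $A$-arms always force a $\mu$-trace via $X_0$; the $B\wedge E$ arm is odd iff an $A$-arm also recurs; the $B\wedge C$ arm is odd unless $B\wedge E$ recurs and no $A$-arm does; the $B\wedge D$ arm alone is even. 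Without an exhaustive case analysis over the sets of recurring branches, the claim that the isomorphism preserves the parity of infinite paths remains unproven.
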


\begin{figure}

\footnotesize{
\begin{scprooftree}{0.9}

\AxiomC{\textbf{*}}
\UnaryInfC{{$Y_0,Y_1$}}
\UnaryInfC{\tiny{$(B, \dmod \{Y_0\}, D,\dmod\{Y_1\})$}}
\UnaryInfC{\tiny{$(B\wedge \dmod \{Y_0\}), (D\wedge \dmod \{Y_1\})$}} 

\AxiomC{\textbf{*}}
\UnaryInfC{{$Y_0,X_1$}}
\UnaryInfC{\tiny{$(B, \dmod \{Y_0\}, C,\dmod\{X_1\})$}} 
\UnaryInfC{\tiny{$(B\wedge \dmod \{Y_0\}), (C\wedge \dmod \{X_1\})$}} 

\AxiomC{\textbf{*}}
\UnaryInfC{{$Y_0$}}
\UnaryInfC{\tiny{$(B, \dmod \{Y_0\}, E)$}} 
\UnaryInfC{\tiny{$(B\wedge \dmod \{Y_0\}, E)$}} 

\TrinaryInfC{$(B\wedge \dmod \{Y_0\}) ,  (C\wedge \dmod \{X_1\})\vee (D\wedge \dmod \{Y_1\})\vee E$}

\AxiomC{\textbf{*}}
\UnaryInfC{{$X_0,Y_1$}}
\UnaryInfC{\tiny{$(A, \dmod \{X_0\}, D,\dmod\{Y_1\})$}}
\UnaryInfC{\tiny{$(A\wedge \dmod \{X_0\}), (D\wedge \dmod \{Y_1\})$}}

\AxiomC{\textbf{*}}
\UnaryInfC{{$X_0,X_1$ }}
\UnaryInfC{\tiny{$(A, \dmod \{X_0\}, C,\dmod\{X_1\})$} }
\UnaryInfC{\tiny{$(A\wedge \dmod \{X_0\}), (C\wedge \dmod \{X_1\})$} }

\AxiomC{\textbf{*}}
\UnaryInfC{{$X_0$}}
\UnaryInfC{\tiny{$(A, \dmod \{X_0\}, E)$} }
\UnaryInfC{\tiny{$(A\wedge \dmod \{X_0\}, E)$} }

\TrinaryInfC{$(A\wedge \dmod \{X_0\}) ,  (C\wedge \dmod \{X_1\})\vee (D\wedge \dmod \{Y_1\})\vee E$}

\BinaryInfC{\textbf{*}$(A\wedge \dmod \{X_0\})\vee (B\wedge \dmod \{Y_0\}) ,  (C\wedge \dmod \{X_1\})\vee (D\wedge \dmod \{Y_1\})\vee E$}

\UnaryInfC{$\mu X_0. \nu Y_0.(A\wedge \dmod \{X_0\})\vee (B\wedge \dmod \{Y_0\}) \wedge \mu X_1 \nu Y_1 (C\wedge \dmod \{X_1\})\vee (D\wedge \dmod \{Y_1\})\vee E$}
\end{scprooftree}
}
\caption{Tableau for $\alpha$}\label{fig:alpha}

\footnotesize{
\begin{scprooftree}{0.9}

\AxiomC{\textbf{*}}
\UnaryInfC{$Y_1$}
\UnaryInfC{\scriptsize{$ (B, D, \dmod \{Y_1\})$}}
\UnaryInfC{\scriptsize{$ (B\wedge D\wedge \dmod \{Y_1\})$}}

\AxiomC{\textbf{*}}
\UnaryInfC{$X_1$}
\UnaryInfC{\scriptsize{$ (B, C, \dmod \{X_1\})$}}
\UnaryInfC{\scriptsize{$ (B\wedge C\wedge \dmod \{X_1\})$}}

\AxiomC{\textbf{*}}
\UnaryInfC{$Y_0$}
\UnaryInfC{\scriptsize{$ (B, E, \dmod \{Y_0\})$}}
\UnaryInfC{\scriptsize{$ (B\wedge E\wedge \dmod \{Y_0\})$}}

\TrinaryInfC{$ (B\wedge E\wedge \dmod \{Y_0\})\vee (B\wedge C\wedge \dmod \{X_1\})\vee (B\wedge D\wedge \dmod \{Y_1\})$}

\AxiomC{\textbf{*}}
\UnaryInfC{$X_0$}
\UnaryInfC{\scriptsize{$ (A, D, \dmod \{X_0\})$}}
\UnaryInfC{\scriptsize{$ (A\wedge D\wedge \dmod \{X_0\})$}}

\AxiomC{\textbf{*}}
\UnaryInfC{$X_0$}
\UnaryInfC{\scriptsize{$ (A, C, \dmod \{X_0\})$} }
\UnaryInfC{\scriptsize{$ (A\wedge C\wedge \dmod \{X_0\})$}}

\AxiomC{\textbf{*}}
\UnaryInfC{$X_0$}
\UnaryInfC{\scriptsize{$ (A, E, \dmod \{X_0\})$} }
\UnaryInfC{\scriptsize{$ (A\wedge E\wedge \dmod \{X_0\})$}}

\TrinaryInfC{$ (A\wedge E\wedge \dmod \{X_0\}) \vee (A\wedge D\wedge \dmod \{X_0\})\vee (A\wedge C\wedge \dmod \{X_0\})$}

\BinaryInfC{\textbf{*}
$ (A\wedge E\wedge \dmod \{X_0\}) \vee (A\wedge D\wedge \dmod \{X_0\})\vee (A\wedge C\wedge \dmod \{X_0\})\vee 
(B\wedge E\wedge \dmod \{Y_0\})\vee (B\wedge C\wedge \dmod \{X_1\})\vee (B\wedge D\wedge \dmod \{Y_1\})$}

\UnaryInfC{\footnotesize{$
\mu X_0. \nu Y_0. \mu X_1. \nu Y_1. (A\wedge E\wedge \dmod \{X_0\}) \vee (A\wedge D\wedge \dmod \{X_0\})\vee (A\wedge C\wedge \dmod \{X_0\})\vee
(B\wedge E\wedge \dmod \{Y_0\})\vee (B\wedge C\wedge \dmod \{X_1\})\vee (B\wedge D\wedge \dmod \{Y_1\})$}}
\end{scprooftree}
}
\caption{Tableau for $\beta$}\label{fig:beta}

\end{figure}

\begin{proof}
The tableaus for both formulas are written out in Figures \ref{fig:alpha} and \ref{fig:beta}.
The two tableaus are isomorphic with respect to branching structure,
node type and the literals at modal nodes. To prove
their equivalence, it is therefore sufficient to argue that this isomorphism
also preserves the parity of infinite branches, that is to say that there is a $\mu$-trace in an infinite path of
one if and only if there is a
$\mu$-trace in the corresponding infinite path of the other. \\

To do so, we look, case by case, at the combinations of branches
that a path can see infinitely often and check which have a $\mu$ trace in each tableau.
First argue that the three right-most branches in both tableaus are such that
any path that sees them infinitely often has a $\mu$-trace. This
is witnessed in both cases by the least fixpoint variable $X_{0}$
which will dominate any trace it appears on and appears on a trace
on all paths going through one of these branches infinitely often.
So, in both tableaus, any path going through one of the right-most
branches infinitely often is of odd parity. Now consider the branch
that ends in $Y_{0}$ before regenerating to the node marked * in both tableaus.
All traces on paths that go infinitely often through this branch will
see $Y_{0}$ regenerate infinitely often. Therefore in both tableaus,
a path going through this branch infinitely has a $\mu$ trace if
and only if it also goes through one of the three rightmost branches
infinitely often. Now consider the fifth branch from the right, the
branch that regenerates $Y_{0},X_{1}$ in one case and just $X_{1}$
in the other. In both tableaus, a path that goes through this branch
infinitely often will have a $\mu$ trace unless it goes through the
$Y_{0}$ branch infinitely often and doesn't go through one of the
three right-most branches infinitely often. Finally, in both tableaus,
a branch that only sees the left-most branch infinitely often is of
even parity since such a path does not admit any $\mu$-traces. However,
if a path sees this branch and some other branches infinitely often,
its parity is determined by one of the previously analysed cases.
Since we have analysed all the infinite paths on these tableaus and concluded that
in each case the parity of a path is the same in both tableaus, this concludes
the proof that the two tableaus are equivalent.
\end{proof}

The above example yields a disjunctive formula of alternation depth $\{0...3\}$
which semantically only requires alternation depth $\{0,1\}$. This proves that disjunctive
form does not preserve the number of priorities the model checking game of a formula requires.

The next step is to generalise the construction of Example \ref{exa:ex-2}
to arbitrarily many alternations to prove that there is no bound on the number of
alternations of a disjunctive formula tableau equivalent to a non-disjunctive formula of $n$
alternations. To do so, we will first define the one-alternation
formulas $\alpha_{n}$ inductively, based on the formula of Example
\ref{exa:ex-2}. We then argue that the tableau of $\alpha_{n}$ admits
a $(2n+1)$-witness, proving that $\alpha_{n}$ is not tableau equivalent
to any disjunctive formula of less than $2n+1$ alternations. Due
to the argument pertaining to traces in increasingly large tableaus,
its details are, inevitably, quite involved. However, the mechanics
of the tableaus of $\alpha_{n}$ are not difficult; writing down the
tableau of $\alpha_{2}$ and working out its disjunctive form should suffice
to gain an intuition of the proof to follow.
\begin{example}\label{ex-3}
In order to define $\alpha_{n}$ for any $n$ define:\\
\begin{equation}
\begin{split}
a_{1}=\mu X_{1}.\nu Y_{1}.((A_{1}\wedge\dmod\{X_{1}\})\vee(B_{1}\wedge\dmod\{Y_{1}\})\vee E_{1})\wedge \\
\mu X_{0}.\nu Y_{0}.(A_{0}\wedge\dmod\{X_{0}\})\vee(B_{0}\wedge\dmod\{Y_{0}\})\vee E_{0} \\
a_{i+1}=\mu X_{i+1}.\nu Y_{i+1}.((A_{i+1}\wedge\dmod\{X_{i+1}\}) \vee(B_{i+1}\wedge\dmod\{Y_{i+1}\})\vee E_{i+1})\wedge a_{i}
\end{split}
\end{equation}

Then, define: $$\alpha_{n}=\mu X_{n}.\nu Y_{n}.((A_{n}\wedge\dmod\{X_{n}\})\vee(B_{n}\wedge\dmod\{Y_{n}\}))\wedge a_{n-1}$$
 In other words, the formula consists of nested clauses $\mu X_{i}.\nu Y_{i}.((A_{i}\wedge\dmod\{X_{i}\})\vee(B_{i}\wedge\dmod\{Y_{i}\})\vee E_{i})$
connected by conjunctions where the outmost clause does not have a
$\vee E$.
\end{example}

As the formula grows, its tableau becomes unwieldy, but its structure
remains constant: it is just as the tableau of $\alpha$ with more branches.
 Figure \ref{fig:alpha} can be used as reference.

The tableau of any $\alpha_{n}$ follows this structure:
\begin{itemize}
\item The first choice node $\{(A_{n}\wedge\dmod\{X_{n}\}\vee B_{n}\wedge\dmod\{Y_{n}\}),...,(A_{0}\wedge\dmod\{X_{0}\}\vee B_{0}\wedge\dmod\{Y_{0}\}\vee E_{0})\}$
branches into $2\times3^{n}$ modal nodes -- ignoring the modalities attached
to each literals for a moment, this is the decomposition of
$(A_{n}\vee B_{n})\wedge(A_{n-1}\vee B_{n-1}\vee E_{n-1})...\wedge(A_{0}\vee B_{0}\vee E_{0})$
into one large disjunction.
\item Each choice leads to a modal node with some choice of propositional
variables consisting of one of $A_{n}$ and $B_{n}$ and then for
every $i<n$ one of $A_{i},B_{i}$ or $E_{i}$.
\item These modal nodes have a single successor each, consisting of a set
of fixpoint variables. In every case, one of these is $Y_{n}$ or
$X_{n}$ and there is only ever at most one fixpoint variable out
of $\{X_{i},Y_{i}\}$ for each $i$. These nodes will be referred to
as regeneration nodes. When a regeneration node does not contain $X_i$ nor $Y_i$
for some $i$, this corresponds to $E_i$ having been chosen rather than $A_i$ or $B_i$.
\item Nodes consisting of a set of fixpoint variables all regenerate, give
or take a couple of non-branching steps, into the same choice node, identical to the ancestral choice node labelled:
$$\{(A_{n}\wedge\dmod\{X_{n}\}\vee B_{n}\wedge\dmod\{Y_{n}\}),...,(A_{0}\wedge\dmod\{X_{0}\}\vee B_{0}\wedge\dmod\{Y_{0}\}\vee E_{0})\}$$

\item An infinite trace in this tableau sees infinitely often only fixpoint
variables $Y_{i}$ and/or $X_{i}$ for some $i$. As a consequence if
a path goes infinitely often through a regeneration node which does
not contain $X_{i}$ or $Y_{i}$, then there is no trace that sees
$X_{i}$ infinitely often on that path.
\end{itemize}

\begin{lem}
 The formula $\alpha_n$ is tableau equivalent only to disjunctive formulas which require a priority assignment with $2n+1$ priorities. 
\end{lem}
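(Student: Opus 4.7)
The plan is to combine Theorem~\ref{thm:same} with the witness technology of the previous section. Theorem~\ref{thm:same} reduces the claim to showing that every tree with back edges representing the tableau of $\alpha_n$ needs a priority assignment using at least $2n+1$ priorities; by Lemma~\ref{sufficiency} it is enough to do this in one bisimilar representation, and Lemma~\ref{priorities} then converts a $(2n+1)$-witness into the required lower bound. For the representation I would use the obvious tree with back edges sketched before this statement: one main choice node~$C$ with $2\cdot 3^n$ modal children, each of which, after the modal step and regeneration, loops back to~$C$. Any closed walk in this graph is determined by the set $S$ of branches it uses, so a $(2n+1)$-witness reduces to a strictly increasing chain $S_1\subsetneq S_2\subsetneq\cdots\subsetneq S_{2n+1}$ of such subsets whose induced infinite paths in the tableau alternate in parity.

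The analytical heart is to classify, via $\mu$-traces, when an infinite path cycling through $S$ is odd. Fix the minimal priority assignment $\Omega(X_i)=1$, $\Omega(Y_i)=0$ of $\alpha_n$, so that $X_i$ dominates $Y_i$ within each block. Because each $a_i$ decomposes as a conjunction putting block~$i$ next to $a_{i-1}$, traces can migrate only from larger-indexed blocks to smaller-indexed ones and never back, so every trace eventually settles in some block~$i$. A settled trace in block~$i$ survives cycling through $S$ iff no branch of $S$ chooses $E_i$ (vacuous for $i=n$), and is then a $\mu$-trace iff some branch of $S$ chooses $A_i$. Hence the path through $S$ is odd exactly when there is an $i\in\{0,\ldots,n\}$ with no $E_i$ and some $A_i$ in $S$.

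With this criterion I build the chain incrementally. Take $S_1=\{(B_n,B_{n-1},\ldots,B_1,A_0)\}$, which is odd via block~$0$. For each $j=0,1,\ldots,n-1$ I form $S_{2j+2}$ by adjoining the branch $(B_n,\ldots,B_{j+1},E_j,E_{j-1},\ldots,E_0)$: the first $E_j$ extinguishes the block-$j$ trace while no higher-indexed block yet carries any $A_i$, so parity flips to even. Then $S_{2j+3}$ is obtained by adjoining $(B_n,\ldots,B_{j+2},A_{j+1},E_j,\ldots,E_0)$, whose first $A_{j+1}$ creates a $\mu$-trace via block~$j+1$ because no $E_{j+1}$ has yet been added, flipping parity back to odd. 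Finally $S_{2n+1}$ adjoins $(A_n,E_{n-1},\ldots,E_0)$, producing odd parity via block~$n$. Each inclusion is strict and each step flips parity exactly once, giving the desired $(2n+1)$-witness.

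The main obstacle is the bookkeeping needed to verify, at each step, that the new branch flips the parity exactly once in the intended way, without inadvertently reviving a previously extinguished block-$i$ trace or creating an unintended $\mu$-trace elsewhere. The monotone ``one new attribute at a time'' shape of the construction makes this routine: at step $2j+2$ only block~$j$'s status changes (it becomes extinguished for good), and at step $2j+3$ only block~$(j{+}1)$'s status changes (it gains a $\mu$-trace until the next $E_{j+1}$ is added), so all the parity bookkeeping reduces to inspecting a single block at each step.
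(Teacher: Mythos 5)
Your proof is correct and follows essentially the same route as the paper: both reduce the claim to exhibiting a $(2n+1)$-witness of nested, parity-alternating cycles in the tree with back edges (unique up to bisimulation by Lemma~\ref{sufficiency}), built by adjoining one branch at a time, and then invoke Lemma~\ref{priorities} and Theorem~\ref{thm:same}. The differences are cosmetic: the paper's chain starts from the all-$B$ even cycle and pads the tails of its odd cycles with $B$'s where you use $E$'s, and it also writes down an explicit priority assignment on the chosen sub-tableau, which your direct $\mu$-trace criterion for a set of branches makes unnecessary.
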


\begin{proof}
Using the above observations, we will show that the tableau for this
formula requires at least $2n +1$ alternating fixpoints. We describe a priority
assignment to a subset of the nodes of the tableau of $\alpha_{n}$
such that on the paths within this subset, a path is even if and only
if the most significant priority seen infinitely often is even. 
We then argue that this subset constitutes a $2n+1$-witness.

Consider the paths of the tableau which only contain the following regeneration nodes:
\begin{itemize}
 \item For all $i$, the nodes regenerating exactly $Y_{n}Y_{n-1}...Y_{i}$, and
 \item For all $i$ the nodes regenerating exactly $Y_{n}...Y_{i+1}X_{i}Y_{i-1}...Y_{0}$.
\end{itemize}

For each $i$, assign priority $2i$ to the node regenerating $Y_{n}...Y_{i}$ and $2i+1$ to the node
regenerating $Y_{n}...Y_{i+1},X_{i},Y_{i-1},...Y_{0}$. We now prove that this priority assignment
is such that a path within this subtableau is even if and only if the highest priority seen infinitely often
is even. 

First consider the nodes $Y_{n}...Y_{i}$, which have been assigned even priority.
 A path that sees such a node infinitely often can only have a $\mu$-trace if it sees
a node regerating some $X_{j}$, $j>i$ infinitely often. Such a node would have an odd
priority greater than $Y_{n}...Y_{i}$. Therefore, if the most significant priority
seen infinitely often is even, the path has no $\mu$ trace.
 Conversely, if a path sees $Y_{n}...X_{i}...Y_{0}$ infinitely often and no
$Y_{n}...Y_{j}$ where $j>i$ infinitely often, then there is a trace which only
regenerated $X_{i}$ and $Y_{i}$ infinitely often. This is a $\mu$ trace since $X_{i}$
is more significant than $Y_{i}$. This priority assignment therefore describes the parity
of infinite paths on this subset of paths of $\mathcal{T}$.

Any assignment of priorities onto $\mathcal{T}$ should, on this
subset of paths, agree in parity with the above priority assignment.
However, in any tree with back edges generating this tableau, this
subset of paths constitutes a $2n+1$ witness: $c_{0}$ is a cycle
that only sees $Y_{n}...Y_{0}$, $c_{1}$ contains $c_{0}$ and also sees $Y_{n}...X_{1}Y_{0}$ infinitely often
and for all $i>1$, the cycle $c_{2i}$ is one containing $c_{2i-1}$ and $Y_{n}...Y_{i}$ while $c_{2i+1}$ is one
containing $c_{2i}$ and $Y_{n}...X_{i}...Y_{0}$. Each cycle $c_j$ is dominated by the priority $j$, making $c_0,...,c_{2i+1}$
a $2i+1$-witness. 
 Thus, using Theorem \ref{thm:same} any disjunctive formula with tableau $\mathcal{T}$ must require at least $2n+1$ priorities.
\end{proof}

This concludes the proof that for arbitrary $n$, there are one-alternation $L_\mu$ formulas which
are tableau equivalent to disjunctive formulas with $n$ alternations. 
\subsection{Disjunctive formulas with small alternation depth}

The previous section showed that transforming a formula into disjunctive form can increase its alternation depth. The converse is
much easier to show: there are very simple formulas for which the transformation into disjunctive form eliminates all alternations.

\begin{lem}\label{lem:finite}
For any formula $\psi$, the formula $(\mu X.\dmod\{X\}\vee\dmod\bot)\wedge\psi$
is tableau equivalent to a disjunctive formula without $\nu$-operators.\end{lem}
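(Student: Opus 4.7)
The plan is to show that every infinite branch of the tableau $\mathcal{T}$ of $\phi = (\mu X.\dmod\{X\}\vee\dmod\bot)\wedge\psi$ carries a $\mu$-trace driven by $X$. Once this is established, a finite tree with back edges representing $\mathcal{T}$ can be equipped with the constant priority $1$ on all nodes, and Theorem \ref{tree-to-formula} then produces a disjunctive formula tableau equivalent to $\phi$ whose only fixpoint bindings are $\mu$.

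First I would track the left conjunct $\mu X.\dmod\{X\}\vee\dmod\bot$ through the tableau rules. The $(\wedge)$ rule preserves it; $(\mu)$ unfolds it to $\dmod\{X\}\vee\dmod\bot$; $(\vee)$ branches to $\dmod\{X\}$ or $\dmod\bot$; $(X)$ regenerates $X$ back to $\mu X.\dmod\{X\}\vee\dmod\bot$. The key case is the modal rule: if $\dmod\{X\}$ appears in the label $\Gamma$, then every child contains either $X$ (as the chosen $\psi$ when $\mathcal{B}'=\{X\}$) or $\bigvee\{X\} = X$ (when some other $\mathcal{B}'$ is chosen). If instead the $\dmod\bot$ alternative had been taken, the next modal step either chooses $\mathcal{B}' = \emptyset$ (yielding no child from this choice) or some other $\mathcal{B}'$ (producing a child containing $\bigvee\emptyset = \bot$), so the branch cannot continue infinitely. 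Hence on every infinite branch the $\mu$-conjunct reduces to $\dmod\{X\}$ at every modal step, and $X$ regenerates infinitely often; the trace staying on these descendants is a $\mu$-trace whose only infinitely regenerating variable is the $\mu$-bound $X$, so every infinite path of $\mathcal{T}$ is odd.

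Next I would take any finite tree with back edges $(A,\Omega)$ representing $\mathcal{T}$ and replace $\Omega$ by the constant assignment $\Omega'(n) = 1$. Every cycle in $A$ then has maximum priority $1$, matching the odd parity of the corresponding infinite path in $\mathcal{T}$, so $(A,\Omega')$ is still a valid tree with back edges for $\mathcal{T}$. Applying Theorem \ref{tree-to-formula} to $(A,\Omega')$ yields a disjunctive formula with priority assignment of co-domain $\{0,1\}$ that is tableau equivalent to $\mathcal{T}$, and hence to $\phi$. Since only the odd priority $1$ is used by back-edge targets and the construction of Theorem \ref{tree-to-formula} introduces a $\nu$-binding only at back-edge targets of even priority, the resulting disjunctive formula contains no $\nu$-operators.

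The main obstacle is Step 1: carefully verifying that the $\mu$-conjunct survives through the modal rule when multiple $\dmod$-formulas coexist in the label, and justifying that the $\dmod\bot$ branch genuinely closes off every would-be infinite path regardless of what $\psi$ contributes. Once that invariant is secured, the reduction to Theorem \ref{tree-to-formula} is a straightforward bookkeeping step.
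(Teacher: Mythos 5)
Your proof is correct and follows essentially the same route as the paper's: every modal node contains $\dmod\{X\}$ or $\dmod\bot$, the latter terminates the branch, the former forces $X$ to regenerate, so every infinite path carries a $\mu$-trace and is odd, whence a representing tree with back edges admits a priority assignment using only odd priorities and Theorem \ref{tree-to-formula} yields a $\nu$-free disjunctive formula. You simply spell out in more detail the case analysis of the modal rule and the final appeal to the tree-to-formula construction, both of which the paper leaves implicit.
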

\begin{proof}
The semantics of $(\mu X.\dmod\{X\}\vee\dmod\bot)\wedge\psi$ are
that a structure must not have infinite paths and $\psi$ must hold.
Consider $\mathcal{T}$ , the tableau for $(\mu X.\dmod\{X\}\vee\dmod\bot)\wedge\psi$.
It is easy to see that every modal node will either contain $\dmod\{X\}$
or $\dmod\bot$. The latter case terminates that branch of the tableau,
while the former will populate every successor node with $X$ which
will then regenerate into $(\dmod\{X\}\vee\dmod\bot)$. As
a result, all infinite paths have a $\mu$ trace; there are no even
infinite paths. Any disjunctive formula generating $\mathcal{T}$
will therefore only require the $\mu$ operator.
\end{proof}
Taking $\psi$ to be a formula of arbitrarily high alternation depth,
$(\mu X.\dmod\{X\}\vee\dmod\bot)\wedge\psi$ shows that the transformation
into disjunctive form can reduce the alternation depth an arbitrarily
large amount. Together with the previous section,
this concludes the argument that there are no bounds on the difference in alternation depth of tableau equivalent formulas.

\section{Discussion}

To summarise, we have studied how tableau decomposition and the transformation into disjunctive form affects
the alternation depth of a formula. The first observation is that within the confines of the disjunctive fragment
of $L_\mu$, alternation depth is very well-behaved with respect to tableau equivalence: any two tableau equivalent
disjunctive formulas have the same alternation depth. However, the story is quite different for $L_\mu$ without the restriction
to disjunctive form: the alternation depth of a $L_\mu$ formula can not be used to predict any bounds on the alternation
depth of tableau equivalent disjunctive formulas and vice versa.\\

Part of the significance of this result are the implications for our understanding of the alternation hierarchy.

The formulas in Section 4 illutrate some of the different types of accidental complexity which any
procedure for deciding the alternation hierarchy would need to somehow overcome. The formula $(\mu X.\dmod \{X\}\vee\bot)  \wedge \psi$, from
Lemma \ref{lem:finite}  which is semantically a $\nu$-free formula for any $\psi$ is an example of a type of accidental complexity which the
tableau decomposition eliminates. However, the formula in Example \ref{ex:simple} illustrate a more subtle form of
accidental complexity that is immune to disjunctive form:  $\nu X.\mu Y.(A\wedge\dmod\{X\})\vee(\bar{A}\wedge\dmod\{Y\})$
is semantically alternation free while the syntactically almost identical formula $\mu X.\nu Y.(A\wedge\dmod\{X\})\vee(\bar{A}\wedge\dmod\{Y\})$ is not. These
formulas pinpoint a very specific challenge facing algorithms that try to reduce the alternation depth of formulas; as such,
they are valuable case studies for those seeking to understand the $L_\mu$ alternation hierarchy.

 Finally, we showed that the following is decidable: for any $L_\mu$ formula, the least alternation depth of a tableau equivalent disjunctive
formula is decidable. This raises the question of whether the same is true if we lift the restriction to disjunctive form,
but keep the restriction to tableau equivalence: for a $L_\mu$ formula, is the least alternation depth of any tableau equivalent
formula decidable? Tableau equivalence is a stricter equivalence to semantic equivalence, so this problem is likely to be easier than
deciding the alternation hierarchy with respect to semantic equivalence but it would still be a considerable step towards understanding accidental
complexity in $L_\mu$.

\paragraph{Acknowledgements}
I thank the anonymous reviewers for their thoughtful comments, which have helped improve the presentation of
this paper and relate this work to similar results for other automata.

\section{Bibliography}

\bibliographystyle{eptcs}
\bibliography{fics2015}
\end{document}